\newtheorem{theorem}{Theorem}
\newtheorem{lemma}{Lemma}
\newtheorem{proposition}{Proposition}
\theoremstyle{definition}
\newtheorem{definition}{Definition}
\newtheorem{remark}{Remark}
\newtheorem{example}{Example}
\newcommand{\tabincell}[2]{\begin{tabular}{@{}#1@{}}#2\end{tabular}}
\renewcommand{\vec}[1]{\boldsymbol{#1}} 
\title{Gray Images of Cyclic Codes over $\mathbb{Z}_{p^2}$ and $\mathbb{Z}_p\mathbb{Z}_{p^2}$\thanks{This research is supported by the National Natural Science Foundation of China (12071001).}}
\date{ }
\author{
\small{ Xuan Wang, Minjia Shi}\\ 
\and \small{Key Laboratory of Intelligent Computing and Signal Processing, Ministry of Education, }\\
\small{School of Mathematical Sciences, Anhui University, Hefei, 230601, China}\\
}
\begin{document}

\maketitle


\begin{abstract}
	In the paper, we firstly study the algebraic structures of $\mathbb{Z}_p \mathbb{Z}_{p^k}$-additive cyclic codes and give the generator polynomials and the minimal spanning set of these codes.
    Secondly, a necessary and sufficient condition for the $\mathbb{Z}_p\mathbb{Z}_{p^2}$-additive cyclic code whose Gray image is linear (not necessarily cyclic) over $\mathbb{Z}_p$
    is given.
    Moreover, as for some special families of cyclic codes over $\mathbb{Z}_{p^2}$ and $\mathbb{Z}_p \mathbb{Z}_{p^2}$, the linearity of the Gray images is determined.
\end{abstract}

\textbf{Keywords: Gray images, $\mathbb{Z}_p \mathbb{Z}_{p^2}$-Additive cyclic codes, Minimal spanning sets}

\section{Introduction} \label{sec:introduction}

Denote by $\mathbb{Z}_p$ and $\mathbb{Z}_{p^k}$ be the rings of integers modulo $p$ and $p^k$, respectively.
Let $\mathbb{Z}_p^n$ and $\mathbb{Z}_{p^k}^n$ be the space of $n$-tuples over $\mathbb{Z}_p$ and $\mathbb{Z}_{p^k}$, respectively.
A $p$-ary code is a non-empty subset $C$ of $\mathbb{Z}_p^n$.
If the subset is a vector space, then we say $C$ is a linear code.
Similarly, a non-empty subset $\mathcal{C}$ of $\mathbb{Z}_{p^k}^n$ is a linear code if $\mathcal{C}$ is a submodule of $\mathbb{Z}_{p^k}^n$.

In $1994$, Hammons \textit{et al.} \cite{1994} showed that some well-known codes can be seen as Gray images of linear codes over $\mathbb{Z}_4$. Later, Calderbank \textit{et al.} \cite{p-adic_cyclic_codes} gave the structure of cyclic codes over $\mathbb{Z}_{p^k}$.
Besides, Ling \textit{et al.}  \cite{LS_Gray_map} studied $\mathbb{Z}_{p^k}$-linear codes, and characterized the linear cyclic codes over $\mathbb{Z}_{p^2}$ whose Gray images are linear cyclic codes.

In $1973$, Delsarte \cite{association_schemes} first defined and studied the additive codes in terms of association schemes.
Borges \textit{et al.} \cite{binary_images_of_z2z4_cyclic,z2z4_generators} studied the generator polynomials, dual codes
and binary images of the $\mathbb{Z}_2 \mathbb{Z}_4$-additive codes.
Since then, a lot of work has been devoted to characterizing $\mathbb{Z}_2\mathbb{Z}_{4}$-additive codes.
Dougherty \textit{et al.} \cite{1-weight-Z2Z4} constructed one
weight $\mathbb{Z}_2\mathbb{Z}_{4}$-additive codes and analyzed their parameters.
Benbelkacem \textit{et al}. \cite{Z2Z4-ACD-LCD} studied
$\mathbb{Z}_2\mathbb{Z}_{4}$-additive complementary dual codes and their Gray images.
In fact, these codes can be viewed as a generalization of the linear complementary dual (LCD for short) codes \cite{LCD} over finite fields.
Bernal \textit{et al.} \cite{Z2Z4-decoding} introduced a decoding method of the $\mathbb{Z}_2\mathbb{Z}_{4}$-linear codes.
More structure properties of $\mathbb{Z}_2\mathbb{Z}_{4}$-additive codes can be found in \cite{Z2Z4-MDS,Z2Z4-kernel-and-rank}.

Moreover, the additive codes over different mixed alphabets have also been intensely studied, for example, $\mathbb{Z}_2\mathbb{Z}_{2}[u]$-additive codes \cite{Z2Z2u}, $\mathbb{Z}_2 \mathbb{Z}_{2^s}$-additive codes \cite{Z2Z2s}, $\mathbb{Z}_{p^r} \mathbb{Z}_{p^s}$-additive codes \cite{ZprZps}, $\mathbb{Z}_2\mathbb{Z}_{2}[u, v]$-additive codes \cite{Z2Z2uv}, $\mathbb{Z}_p \mathbb{Z}_{p^k}$-additive codes \cite{zpzpk_additive_and_dual} and $\mathbb{Z}_p (\mathbb{Z}_p + u \mathbb{Z}_p)$-additive codes \cite{WRS}, and
so on.
It is worth mentioning that $\mathbb{Z}_2\mathbb{Z}_{4}$-additive cyclic codes form an important family of $\mathbb{Z}_2\mathbb{Z}_{4}$-additive codes, many optimal binary codes can be obtained from the images of this family of codes.
In $2014$, Abualrub \textit{et al.} \cite{z2z4_additive_cyclic_codes} discussed $\mathbb{Z}_2\mathbb{Z}_{4}$-additive cyclic codes.
In addition, in \cite{z2z4-linear}, the authors discussed the rank and kernel dimension of the $\mathbb{Z}_2 \mathbb{Z}_4$-linear codes.
Later, Shi \textit{et al.} \cite{wsk_z3z9_linear_rank_kernel} generated the results about rank and kernel to $\mathbb{Z}_3 \mathbb{Z}_9$.
In $2021$, Shi \textit{et al.} \cite{z2z4-quasi-cyclic} defined the $\mathbb{Z}_2 \mathbb{Z}_4$-additive quasi-cyclic codes and gave the conditions for the code to be self-dual and additive complementary dual (ACD for short), respectively.
More details of $\mathbb{Z}_2\mathbb{Z}_{4}$-additive cyclic codes can be found in \cite{z2z4_additive_cyclic_codes,z2z4_generators,binary_images_of_z2z4_cyclic}.

Motivated by \cite{binary_images_of_z2z4_cyclic}, \cite{LS_Gray_map} and \cite{binary_images_of_z4_cyclic}, this paper is devoted to studying the algebraic structures and the Gray images of cyclic codes over $\mathbb{Z}_{p^2}$ and $\mathbb{Z}_p \mathbb{Z}_{p^2}$, respectively.
The paper is organized as follows.
In Section \ref{sec:preliminaries}, we recall the necessary concepts and properties on $\mathbb{Z}_p\mathbb{Z}_{p^k}$-additive cyclic codes and Gray maps.
In section \ref{sec: structures}, we give the algebra structures of a $\mathbb{Z}_p\mathbb{Z}_{p^2}$-additive cyclic code.
In Section \ref{sec:Gray image of cyclic codes}, we study the Gray image of a $\mathbb{Z}_p\mathbb{Z}_{p^2}$-additive cyclic code,
and give a necessary and sufficient condition for the image to be linear.
In Section \ref{sec: examples}, we presented some special families of codes whose Gray images are linear.

\section{Preliminaries}\label{sec:preliminaries}
In the following sections, assume $p$ an odd prime number.
\subsection{Gray Map and Gray Image}\label{subsec:Gray map}

In \cite{LS_Gray_map}, the classical Gray map $\phi: \mathbb{Z}_{p^2} \rightarrow \mathbb{Z}_p^p$ is defined as
\[ \phi(\theta) = \theta_0 (1, 1, \cdots, 1) + \theta_1(0, 1, \cdots, p-1), \]
where $\theta = \theta_0 p + \theta_1 \in \mathbb{Z}_{p^2}$, $0 \leqslant \theta_0, \theta_1 < p$ and $\phi(\theta)$ is a vector of length $p$.
The homogeneous weight \cite{LS_Gray_map} $wt_{hom}$ on $\mathbb{Z}_{p^2}$ is
\[
wt_{hom}(x) =
\begin{cases}
	p, & \text{if} \quad x \in p\mathbb{Z}_{p^2} \setminus \{ 0 \}, \\
	p-1, & \text{if} \quad x \notin p\mathbb{Z}_{p^2}, \\
	0, & \text{if} \quad x = 0.
\end{cases}
\]
The homogeneous distance is $d_{hom}(\vec{u}, \vec{v}) = wt_{hom}(\vec{u} - \vec{v})$, where $\vec{u}, \vec{v} \in \mathbb{Z}_{p^2}^n$.
Then $\phi$ is an isometry from $(\mathbb{Z}_{p^2}^n, d_{hom})$ to $(\mathbb{Z}_{p}^{pn}, d_H)$, where $d_H$ is the Hamming distance on $\mathbb{Z}_{p}^p$.

For positive integers $\alpha$ and $\beta$, let $n = \alpha + p\beta$ and $\Phi: \mathbb{Z}_p^{\alpha} \times \mathbb{Z}_{p^2}^{\beta} \rightarrow \mathbb{Z}_p^n$ be an extension of the Gray map $\phi$,
which is defined as: \[ \Phi(\vec{x}, \vec{y}) = (\vec{x}, \phi(y_1), \phi(y_2), \cdots, \phi(y_{\beta})) \]
for any $\vec{x} \in \mathbb{Z}_p^{\alpha}$, and $\vec{y} = (y_1, y_2, \cdots, y_{\beta}) \in \mathbb{Z}_{p^2}^{\beta}$.
The $p$-ary code $C = \Phi(\mathcal{C})$ is called a $\mathbb{Z}_p \mathbb{Z}_{p^2}$-\textit{linear code}
or the \textit{$p$-ary image} of a $\mathbb{Z}_p\mathbb{Z}_{p^2}$-additive code.
We denote $\langle C \rangle$ the \textit{linear span} of the codewords of $C$.

\subsection{\texorpdfstring{$\mathbb{Z}_p\mathbb{Z}_{p^k}$}{}-Additive Cyclic Codes}\label{subsec:ZpZpk additive cyclic codes}

Recall some definitions about $\mathbb{Z}_p\mathbb{Z}_{p^k}$-additive cyclic codes in \cite{zpzpk_additive_and_dual, TY_Zhu}.

The code $\mathcal{C}$ is called a \textit{$\mathbb{Z}_p\mathbb{Z}_{p^k}$-additive code} with \textit{parameter $(\alpha, \beta)$} if it is a subgroup of
$\mathbb{Z}_p^{\alpha} \times \mathbb{Z}_{p^k}^{\beta}$, where $k\geqslant 2$ is an integer and $\gcd(\beta, p) = 1$.
Obviously, $\mathcal{C}$ is a $\mathbb{Z}_{p^k}$-submodule of $\mathbb{Z}_p^{\alpha} \times \mathbb{Z}_{p^k}^{\beta}$.
Moreover, $\mathcal{C}$ is a subgroup of $\mathbb{Z}_p^{\alpha} \times \mathbb{Z}_{p^k}^{\beta}$, and from the theorem of finite abelian groups, it is isomorphic to an abelian
structure $\mathbb{Z}_p^{\gamma_0+\gamma_k} \times \mathbb{Z}_{p^k}^{\gamma_1} \times \mathbb{Z}_{p^{k-1}}^{\gamma_2} \times \cdots \times \mathbb{Z}_{p^2}^{\gamma_{k-1}}$
where $\gamma_i \geqslant 0$ are integers for $i=0, 1, \cdots, k$.
Thus, the size of $\mathcal{C}$ is $p^{\gamma_0+\gamma_k + k\gamma_1 + (k-1)\gamma_2 +\cdots + 2\gamma_{k-1}}$.

Let $X$ be the set of $\mathbb{Z}_p$ coordinate positions, and $Y$ be the set of $\mathbb{Z}_{p^k}$ coordinate positions, so $|X|=\alpha$ and $|Y|=\beta$.
In general, say the first $\alpha$ positions corresponds to the set $X$ and the last $\beta$ positions corresponds to the set $Y$.
Let $\mathcal{C}_X$ and $\mathcal{C}_Y$ be the punctured codes of $\mathcal{C}$ by deleting the coordinates outside $X$ and $Y$, respectively.

Throughout this paper, the \textit{order} of a codeword $\vec{v}$ (or a vector) is the additive order, i.e.,
the smallest positive integer $s$ such that $s \vec{v} = \vec{0}$.
Then the subcode $\mathcal{C}_p$ is defined as
\[ \mathcal{C}_p = \left\{ \vec{c} \in \mathcal{C} : p \vec{c} = \vec{0} \right\}. \]
Let $\kappa$ be the dimension of the $p$-ary linear code $(\mathcal{C}_p)_X$.
If $\alpha = 0$, then $\kappa = 0$.
Consisting of all these parameters, we will say that $\mathcal{C}$ is of \textbf{type} $(\alpha, \beta; \gamma_0, \gamma_1, \cdots, \gamma_k; \kappa)$.

When $k = 2$, the type is always written as $(\alpha, \beta; \gamma, \delta; \kappa)$, i.e.,
the code is isomorphic to $\mathbb{Z}_p^\gamma \times \mathbb{Z}_{p^2}^\delta$.
Let $\kappa_1$ be the dimension of the linear subcode $\{(\vec{u}_1, \vec{0}) \in \mathcal{C} \}$ and $\kappa_2 = \kappa - \kappa_1$.
Similarly, it's known that the linear subcode $\{ (\vec{0}, \vec{v}_2) \in \mathcal{C}\}$ is isomorphic to $\mathbb{Z}_{p^2}^{\delta_2}$ for some $0 \leqslant \delta_2 \leqslant \delta$, then let $\delta_1 = \delta - \delta_2$.
A $\mathbb{Z}_p\mathbb{Z}_{p^2}$-additive code $\mathcal{C}$ is called \textbf{separable} if $\mathcal{C} = \mathcal{C}_X \times \mathcal{C}_Y$.
By definition, a $\mathbb{Z}_p\mathbb{Z}_{p^2}$-additive code is separable if and only if $\kappa_2 = \delta_1 = 0$; that is $\kappa = \kappa_1$ and $\delta = \delta_2$.

\begin{definition}{\cite{z2z4_additive_cyclic_codes, TY_Zhu}}\label{orign-def of cyclic}
    A subset $\mathcal{C}$ of $\mathbb{Z}_p^{\alpha} \times \mathbb{Z}_{p^k}^{\beta}$ is called a $\mathbb{Z}_p\mathbb{Z}_{p^k}$-additive cyclic code
    if it satisfies the following conditions:
    \begin{enumerate}
        \item[(1)] $\mathcal{C}$ is a $\mathbb{Z}_p\mathbb{Z}_{p^k}$-additive code;
        \item[(2)] For any codeword $\vec{u} = (\vec{u}', \vec{u}'') = (u'_0, u'_1, \cdots, u'_{\alpha - 1}, u''_0, u''_1, \cdots, u''_{\beta - 1}) \in \mathcal{C}$, its cyclic shift
        \[
            \pi(\vec{u}) = (\pi(\vec{u}'), \pi(\vec{u}'')) = (u'_{\alpha - 1}, u'_0, \cdots, u'_{\alpha - 2}, u''_{\beta - 1}, u''_0, \cdots, u''_{\beta - 2})
        \]
        is also in $\mathcal{C}$.
    \end{enumerate}
\end{definition}

\begin{remark}{\cite{z2z4_generators,ZprZps}} \label{remark: zpzp2 generator matrix 2}
    Since $\kappa = \kappa_1 + \kappa_2$ and $\delta = \delta_1 + \delta_2$,
    where $\kappa_1$ is the dimension of the subcode $\{(\vec{u}_1, \vec{0}) \in \mathcal{C}\}$
    and $\delta_2$ is the dimension of the subcode $\{ (\vec{0}, \vec{v}_2) \in \mathcal{C}\}$ whose codewords are all of order $p^2$,
    then the generator matrix can be written as:
    \begin{equation}
    \label{eq:generator matrix 2}
    \mathcal{G}_S =
        \begin{pmatrix}
            \begin{array}{cccc|ccccc}
                I_{\kappa_1} & T & T'_{p_1} & T_{p_1} & \vec{0} & \vec{0} & \vec{0} & \vec{0} & \vec{0} \\
                \vec{0} & I_{\kappa_2} & T'_{p_2} & T_{p_2} & pT_2 & pT_{\kappa_2} & \vec{0} & \vec{0} & \vec{0} \\
                \vec{0} & \vec{0} & \vec{0} & \vec{0} & pT_1 & pT'_1 & pI_{\gamma - \kappa} & \vec{0} & \vec{0} \\
                \hline
                \vec{0} & \vec{0} & S_{\delta_1} & S_p & S_{11} & S_{12} & R_1 & I_{\delta_1} & \vec{0} \\
                \vec{0} & \vec{0} & \vec{0} & \vec{0} & S_{21} & S_{22} & R_2 & R_{\delta_1} & I_{\delta_2} \\
            \end{array}
            \end{pmatrix},
    \end{equation}
    where $I_r$ is the identity matrix of size $r\times r$; $T_{p_i}, T'_{p_i}, S_{\delta_1}$ and $S_p$ are matrices over $\mathbb{Z}_p$;
    $T_1, T_2, T_{\kappa_2}, T'_1$ and $R_i$ are matrices over $\mathbb{Z}_{p^2}$ with entries in $\mathbb{Z}_p$;
    and $S_{ij}$ are matrices over $\mathbb{Z}_{p^2}$.
    Besides, $S_{\delta_1}$ and $T_{\kappa_2}$ are square matrices of full rank $\delta_1$ and $\kappa_2$, respectively.
\end{remark}
Let $\mathcal{G}$ be the generator matrix of a $\mathbb{Z}_p \mathbb{Z}_{p^2}$-additive cyclic code, then $\mathcal{G}$ and $\mathcal{G}_S$ should be permutation-equivalent,
i.e., the codes generated by $\mathcal{G}$ and $\mathcal{G}_S$ are permutation-equivalent.
Moreover, if the rows of $\mathcal{G}$ are arranged as $\mathcal{G}_S$, we call $\mathcal{G}$ is \textbf{in the form} of $\mathcal{G}_S$.
\section{Structures of \texorpdfstring{$\mathbb{Z}_p\mathbb{Z}_{p^k}$}{}-Additive Cyclic Codes} \label{sec: structures}

This section is devoted to giving the structures and the minimal spanning set of $\mathbb{Z}_p\mathbb{Z}_{p^k}$-additive cyclic codes in terms of polynomials.

\subsection{Generator Polynomials of \texorpdfstring{$\mathbb{Z}_p\mathbb{Z}_{p^k}$}{}-Additive Cyclic Codes}\label{subsec: generator polynomials of ZpZpk additive cyclic codes}

In \cite{z2z4_additive_cyclic_codes}, the authors studied the $\mathbb{Z}_2 \mathbb{Z}_4$-additive cyclic codes.
Recall that there exists a bijection between $\mathbb{Z}_p^{\alpha} \times \mathbb{Z}_{p^k}^{\beta}$ and
$R_{\alpha, \beta} = R_{\alpha} \times R_{\beta} = \mathbb{Z}_p[x]/(x^{\alpha}-1) \times \mathbb{Z}_{p^k}[x]/(x^{\beta}-1)$
given by
\[
(u'_0, \cdots, u'_{\alpha - 1}, u''_0, \cdots, u''_{\beta - 1}) \longmapsto
(u'_0 + \cdots + u'_{\alpha - 1} x^{\alpha-1}, u''_0 + \cdots + u''_{\beta - 1} x^{\beta-1}).
\]
Hence, any codeword can be regarded as a vector or as a polynomial.
This subsection is devoted to giving the structures of $\mathbb{Z}_p\mathbb{Z}_{p^k}$-additive cyclic codes in terms of polynomials.
For simplicity, $a$, $b$, $f$ always refer to the polynomials $a(x), b(x), f(x)$.
Denote the $p$-ary reduction of a polynomial $h \in \mathbb{Z}_{p^k}[x]$ by $\overline{h } \equiv h  \pmod{p}$.
Then define the following multiplication \[ h  \star (c_1, c_2) = ( \overline{h} c_1, h c_2), \]
where $(c_1, c_2) \in R_{\alpha, \beta}$ and the products of the right side are the standard polynomial products in $\mathbb{Z}_p[x]/(x^{\alpha}-1)$
and $\mathbb{Z}_{p^k}[x]/(x^{\beta}-1)$.

Let $\mathcal{C}$ be a $\mathbb{Z}_p\mathbb{Z}_{p^k}$-additive cyclic code, and we will determine the set of generators for $\mathcal{C}$
as a $\mathbb{Z}_{p^k}[x]$-submodule of $R_{\alpha, \beta} = \mathbb{Z}_p^{\alpha} \times \mathbb{Z}_{p^k}^{\beta}$.
Define the map $\Psi: \mathcal{C} \rightarrow \mathbb{Z}_{p^k}[x]/(x^{\beta}-1)$, where $\Psi((g_1, g_2)) = g_2$.
Obviously, $\Psi$ is a module homomorphism.
Then, we claim that its image is a $\mathbb{Z}_{p^k}[x]$-submodule of $\mathbb{Z}_{p^k}[x]/(x^{\beta}-1)$ and its kernel is a submodule of $\mathcal{C}$.
Moreover, the image of $\Psi$ can be regarded as an ideal in the ring $\mathbb{Z}_{p^k}[x]/(x^{\beta}-1)$.
Note that $\ker(\Psi) = \{ (a, 0) \in \mathcal{C} \mid a \in \mathbb{Z}_{p}[x]/(x^\alpha-1) \}$, then we define the set $I$ to be
\[ I = \{ a \in \mathbb{Z}_p[x]/(x^\alpha-1) \mid (a, 0) \in \ker(\Psi) \}. \]
It's clear that $I$ is an ideal and hence a cyclic code in the ring $\mathbb{Z}_p[x]/(x^\alpha-1)$.
More details about these ideals can be seen in \cite{p-adic_cyclic_codes}.

By the first isomorphism theorem and \cite[Theorem 6]{p-adic_cyclic_codes}, we have
\[ \mathcal{C}/\ker(\Psi) \cong \langle f_0, pf_1, \cdots, p^{k-1}f_{k-1} \rangle = \langle f \rangle, \]
where $f_{k-1} \mid f_{k-2} \mid \cdots \mid f_1 \mid f_0 \mid (x^\beta-1)$ and $f = f_0 + pf_1 + \cdots + p^{k-1}f_{k-1}$.

Thus, any $\mathbb{Z}_{p} \mathbb{Z}_{p^k}$-additive cyclic code can be generated as a $\mathbb{Z}_{p^k}[x]$-submodule of $R_{\alpha, \beta}$ by two elements
of the form $(a, 0)$ and $(b, f)$ for some $b\in \mathbb{Z}_p[x]/(x^\alpha-1)$,
i.e., any element in the code $\mathcal{C}$ can be described as\[ d_1 \star (a, 0) + d_2 \star (b, f), \]
where $d_1, d_2 \in \mathbb{Z}_{p^k}[x]$.
In fact, $d_1$ can be restricted in the ring $\mathbb{Z}_p[x]$.
Hence, we get \[ \mathcal{C} = \langle (a, 0), (b, f) \rangle, \] with $p$-ary polynomials $a$, $b$.

Summarize the above discussions, we have the following proposition.
\begin{proposition} \label{Zpk[x] submodule}
    The multiplication $\star$ above is well-defined and $R_{\alpha, \beta}$ is a $\mathbb{Z}_{p^k}[x]$-module with respect to this multiplication.
    Moreover, a code $\mathcal{C}$ is a $\mathbb{Z}_p\mathbb{Z}_{p^k}$-additive cyclic code if and only if $\mathcal{C}$ is a $\mathbb{Z}_{p^k}[x]$-submodule of $R_{\alpha, \beta}$.
\end{proposition}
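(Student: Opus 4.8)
The plan is to separate the statement into two independent parts: first that $\star$ endows $R_{\alpha,\beta}$ with a $\mathbb{Z}_{p^k}[x]$-module structure, and second the characterization of additive cyclic codes as the submodules of this module. The unifying observation throughout is that the $p$-ary reduction $\mathbb{Z}_{p^k}[x] \to \mathbb{Z}_p[x]$, $h \mapsto \overline{h}$, is a surjective ring homomorphism; every module axiom on the first coordinate will be inherited from this fact, while the second coordinate is nothing but the usual scalar action of $\mathbb{Z}_{p^k}[x]$ on its own quotient $\mathbb{Z}_{p^k}[x]/(x^\beta - 1)$.

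For well-definedness, I would fix $h \in \mathbb{Z}_{p^k}[x]$ and check that $h \star (c_1, c_2)$ is independent of the representatives chosen for $c_1 \in \mathbb{Z}_p[x]/(x^\alpha - 1)$ and $c_2 \in \mathbb{Z}_{p^k}[x]/(x^\beta - 1)$: replacing $c_1$ by $c_1 + (x^\alpha - 1)g$ alters $\overline{h}\,c_1$ only by a multiple of $x^\alpha - 1$, and likewise in the second coordinate, so the resulting class in $R_{\alpha,\beta}$ is unchanged. For the module axioms I would then verify $1 \star v = v$, $(h + h')\star v = h \star v + h' \star v$, $h \star (v + v') = h\star v + h \star v'$, and $(hh')\star v = h \star (h' \star v)$ one at a time; on the first coordinate each of these reduces to the identities $\overline{1} = 1$, $\overline{h + h'} = \overline{h} + \overline{h'}$, and $\overline{hh'} = \overline{h}\,\overline{h'}$, which hold precisely because reduction is a ring homomorphism, while on the second coordinate each is the ordinary $\mathbb{Z}_{p^k}[x]$-module structure of $\mathbb{Z}_{p^k}[x]/(x^\beta - 1)$.

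For the equivalence, the central identification is that the indeterminate $x \in \mathbb{Z}_{p^k}[x]$ acts as the cyclic shift: since $\overline{x} = x$, one has $x \star (c_1, c_2) = (x c_1, x c_2)$, and under the bijection with $\mathbb{Z}_p^{\alpha} \times \mathbb{Z}_{p^k}^{\beta}$ this is exactly $\pi(\vec{u})$ of Definition~\ref{orign-def of cyclic}. In the forward direction, suppose $\mathcal{C}$ is a $\mathbb{Z}_p\mathbb{Z}_{p^k}$-additive cyclic code. Being additive, $\mathcal{C}$ is a $\mathbb{Z}_{p^k}$-submodule, hence closed under the constant (scalar) action $\lambda \star -$ for $\lambda \in \mathbb{Z}_{p^k}$; being cyclic, it is closed under $x \star -$, and by iterating, under $x^i \star -$ for every $i$. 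Writing an arbitrary $h = \sum_i \lambda_i x^i$ and invoking distributivity together with the associativity axiom in the form $(\lambda_i x^i)\star v = \lambda_i \star (x^i \star v)$, we obtain $h \star v = \sum_i \lambda_i \star (x^i \star v) \in \mathcal{C}$, so $\mathcal{C}$ is a $\mathbb{Z}_{p^k}[x]$-submodule. Conversely, a $\mathbb{Z}_{p^k}[x]$-submodule is in particular closed under the constant scalars, hence a $\mathbb{Z}_{p^k}$-submodule (an additive code), and closed under $x \star - = \pi$, hence cyclic.

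The verifications are all routine, and I do not anticipate a genuine obstacle; the only point deserving a little care is the bookkeeping in the forward direction, namely confirming that closure under scalar multiplication by $\mathbb{Z}_{p^k}$ together with closure under the single shift $x \star -$ really forces closure under the full polynomial action. This is exactly where the module axioms established in the first part get used, so it is worth presenting the two parts in this order.
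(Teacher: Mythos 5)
Your proof is correct and follows essentially the same route as the paper: the paper states this proposition as a summary of its preceding discussion, where the key point is likewise that multiplication by $x$ under $\star$ realizes the cyclic shift $\pi$, so that closure under addition, under $\mathbb{Z}_{p^k}$-scalars (automatic for a subgroup, since integer scaling is repeated addition), and under the shift is equivalent to being a $\mathbb{Z}_{p^k}[x]$-submodule. Your write-up simply makes explicit the routine verifications (well-definedness and the module axioms via the reduction homomorphism $h \mapsto \overline{h}$) that the paper leaves implicit.
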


The following theorem will give the generator polynomial of a $\mathbb{Z}_p \mathbb{Z}_{p^k}$-additive cyclic code.
\begin{theorem}
    \label{generator polynomials of cyclic}
    Let $\mathcal{C}$ be a $\mathbb{Z}_p \mathbb{Z}_{p^k}$-additive cyclic code of type $(\alpha, \beta; \gamma_0, \gamma_1, \cdots, \gamma_k; \kappa)$ with $\gcd(\beta, p) = 1$.
    Then $\mathcal{C}$ can be identified as
    \begin{enumerate}
    	\renewcommand{\labelenumi}{(\theenumi)}
    	\item $\mathcal{C} = \langle (a, 0) \rangle$, where $a \mid (x^\alpha-1) \pmod{p}$;
    	
    	\item $\mathcal{C} = \langle (b, f) \rangle$, where $f = f_0 + p f_1 + p^2 f_2 + \cdots + p^{k-1} f_{k-1}$ with $f_{k-1} \mid f_{k-2} \mid \cdots \mid f_1 \mid f_0 \mid (x^\beta-1)$;
    	
    	\item $\mathcal{C} = \langle (a, 0), (b, f) \rangle$, where $a \mid (x^\alpha-1) \pmod{p}$,
    	$f = f_0 + p f_1 + p^2 f_2 + \cdots + p^{k-1} f_{k-1}$ with $f_{k-1} \mid f_{k-2} \mid \cdots \mid f_1 \mid f_0 \mid (x^\beta-1)$ and $b$ is a polynomial over $\mathbb{Z}_p$ satisfying $\deg(b) < \deg(a)$.
    	Moreover, $a \mid h_{k-1} b$, where $f_{k-1} h_{k-1} = x^\beta - 1$.
    \end{enumerate}
\end{theorem}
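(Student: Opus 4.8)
The plan is to build directly on the module description already assembled before the statement, so that only a small amount of bookkeeping remains. From that discussion we already have: $\ker(\Psi)$ is identified with the $p$-ary cyclic code $I = \langle a\rangle$ (so $a \mid (x^\alpha-1) \pmod p$); the image $\Psi(\mathcal{C})$ is the ideal $\langle f\rangle$ of $\mathbb{Z}_{p^k}[x]/(x^\beta-1)$ with $f = f_0 + pf_1 + \cdots + p^{k-1}f_{k-1}$ and $f_{k-1}\mid\cdots\mid f_0 \mid (x^\beta-1)$ by \cite[Theorem 6]{p-adic_cyclic_codes}; and the first isomorphism theorem forces $\mathcal{C} = \langle (a,0),(b,f)\rangle$ for some preimage $(b,f)$ of $f$ with $b$ a $p$-ary polynomial. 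The remaining work is therefore to (i) split off the two degenerate cases, (ii) normalize $\deg b$, and (iii) establish the divisibility $a \mid h_{k-1}b$.

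First I would dispose of the extreme cases. If $\Psi(\mathcal{C}) = 0$, then $f = 0$ and every codeword lies in $\ker\Psi = \{(c,0) : c \in \langle a\rangle\}$, which is generated by $(a,0)$; this is case (1). Dually, if $\ker\Psi = 0$, then $I = 0$, so the generator $a$ can be taken to be $x^\alpha-1 \equiv 0$, the summand $(a,0)$ is redundant, and we obtain $\mathcal{C} = \langle (b,f)\rangle$, i.e.\ case (2). When both pieces are nontrivial we are in case (3).

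For the degree bound in case (3), I would use that $(a,0)\in\mathcal{C}$ to reduce $b$. Dividing in $\mathbb{Z}_p[x]$, write $b = qa + r$ with $\deg r < \deg a$; since $a$ and $q$ are $p$-ary we have $q\star(a,0) = (qa,0)$, so $(b,f) - q\star(a,0) = (r,f)\in\mathcal{C}$. Replacing $(b,f)$ by $(r,f)$ leaves $\mathcal{C}$ unchanged and enforces $\deg b < \deg a$.

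The main point, and the step I expect to require the most care, is the divisibility $a \mid h_{k-1}b$ with $f_{k-1}h_{k-1} = x^\beta-1$. The idea is to act by $h_{k-1}$ on the generator $(b,f)$ and show the result lands in $\ker\Psi$. Compute $h_{k-1}\star(b,f) = (\overline{h_{k-1}}\,b,\, h_{k-1}f)$. For the second coordinate, expand $h_{k-1}f = \sum_{i=0}^{k-1} p^i\, h_{k-1}f_i$; the divisibility chain gives $f_{k-1}\mid f_i$, so writing $f_i = f_{k-1}g_i$ yields $h_{k-1}f_i = (x^\beta-1)g_i \equiv 0 \pmod{x^\beta-1}$, whence $h_{k-1}f \equiv 0$. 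Thus $h_{k-1}\star(b,f) = (\overline{h_{k-1}}\,b,\,0)\in\ker\Psi$, so $\overline{h_{k-1}}\,b\in I = \langle a\rangle$, that is $a \mid h_{k-1}b$ in $\mathbb{Z}_p[x]/(x^\alpha-1)$ (reading $h_{k-1}$ modulo $p$). The only delicate bookkeeping is keeping track of the reduction $h\mapsto\overline{h}$ in the first coordinate against the exact identity $f_{k-1}h_{k-1} = x^\beta-1$ over $\mathbb{Z}_{p^k}$; once that is handled, the divisibility follows immediately and the three cases are complete.
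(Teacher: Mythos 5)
Your proposal is correct and follows essentially the same route as the paper: both rely on the preceding module-homomorphism discussion (kernel $\langle a\rangle$, image $\langle f\rangle$ via the first isomorphism theorem), both normalize $\deg(b)<\deg(a)$ by the division algorithm over $\mathbb{Z}_p$, and both obtain $a \mid h_{k-1}b$ by applying $h_{k-1}\star(b,f)$ and observing that the second coordinate vanishes so the result lies in $\ker(\Psi)$. Your extra details (explicitly expanding $h_{k-1}f=\sum_i p^i h_{k-1}f_i$ and splitting off the degenerate cases (1) and (2)) are just fuller bookkeeping of steps the paper leaves implicit.
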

\begin{proof}
	It's sufficient to show that $\deg(b) < \deg(a)$ and $a \mid h_{k-1} b$.
	\begin{itemize}
		\item Suppose $\deg(b) \geqslant \deg(a)$, then using the division algorithm (over $\mathbb{Z}_p$), we have $b = qa + r$ for some polynomials $q$ and $r$ over $\mathbb{Z}_p$, where $\deg(r) < \deg(a)$.
		Thus,
		\[ D = \langle (a, 0), (r, f) \rangle = \langle (a, 0), (b, f) \rangle = C, \]
		since $(b, f) = q \star (a, 0) + (r, f)$.
		
		\item Since $f_{k-1} \mid f_{k-2} \mid \cdots \mid f_1 \mid f_0$, then $h_{k-1} f \equiv 0 \pmod{(x^{\beta}-1)}$.
		Consider that \[ \Psi\left( \frac{x^\beta-1}{f_{k-1}} \star (b, f) \right) = \Psi\left( \left( h_{k-1} b, 0 \right) \right) = 0. \]
		Hence, \[ \left( h_{k-1} b, 0 \right)\in \ker(\Psi) \subseteq \mathcal{C} \quad \textnormal{and} \quad a \mid h_{k-1} b. \]
	\end{itemize}
	Thus, we complete the proof.
\end{proof}

\begin{example}
	Using Magma, the polynomial $x^8 - 1$ over $\mathbb{Z}_{3^2}$ can be factored as
	\[ x^8 - 1 = (x-1)(x+1)(x^2+1)(x^2 + 4x + 8)(x^2 + 5x + 8). \]
	Then, $\mathcal{C}_1 = \langle (x^2-1, 0), (x-1, (x^2+5x+8)(x^2+1)(x+1) + 3(x^2+1)(x+1)) \rangle$ can be regarded as a $\mathbb{Z}_3 \mathbb{Z}_{3^2}$-additive cyclic code.
	
	Similarly, the polynomial $x^8 - 1$ over $\mathbb{Z}_{3^3}$ can be factored as
	\[ x^8 - 1 = (x-1)(x+1)(x^2+1)(x^2 + 22x + 26)(x^2 + 5x + 26). \]
	Then, $\mathcal{C}_2 = \langle (x^2-1, 0), (x-1, (x^2+5x+26)(x^2+1)(x+1) + 3(x^2+1)(x+1) + 9(x+1)) \rangle$ can be regarded as a $\mathbb{Z}_3 \mathbb{Z}_{3^3}$-additive cyclic code.
\end{example}

\subsection{Minimal Spanning Set of \texorpdfstring{$\mathbb{Z}_p\mathbb{Z}_{p^k}$}{}-Additive Cyclic Codes}\label{subsec: minimal spanning set of ZpZpk additive cyclic codes}

In this subsection, we will give the minimal spanning set of a $\mathbb{Z}_p\mathbb{Z}_{p^k}$-additive cyclic code.

\begin{theorem}
    \label{spanning set}
    Let $\mathcal{C} = \langle (a, 0), (b, f) \rangle$ be a $\mathbb{Z}_p \mathbb{Z}_{p^k}$-additive cyclic code of type
    $(\alpha, \beta; \gamma_0, \gamma_1, \cdots, \gamma_k; \kappa)$ with $\gcd(\beta, p) = 1$,
    where $f = f_0 + pf_1  + p^2 f_2 + \cdots + p^{k-1} f_{k-1}$ with $f_{k-1} \mid f_{k-2} \mid \cdots \mid f_1 \mid f_0 \mid (x^\beta-1)$.
    Let $\deg(f_i) = t_i$, $\deg(a) = t_a$, then we define
    \[ T_1 = \bigcup_{i=0}^{\alpha-t_a-1} \left\{x^i \star (a, 0)\right\}, T_2 = \bigcup_{i=0}^{\beta-t_0-1} \left\{ x^i \star (b, f)\right\}, \]
    and \[ S_j = \bigcup_{i=0}^{t_j-t_{j+1}-1} \left\{x^i \star (\overline{h_j} b, h_j f)\right\},  \]
    where $h_j f_j \equiv x^\beta - 1 \pmod{p^k}$, $j=0, 1, \cdots, k-2$.
    Then $S = T_1 \cup T_2 \cup S_0 \cup S_1 \cup \cdots \cup S_{k-2}$ forms a minimal spanning set for $\mathcal{C}$ as a $\mathbb{Z}_{p^k}$-module.
    Moreover, we have $\gamma_0 = \alpha - t_a$, $\gamma_1 = \beta - t_0$, $\gamma_i = t_{i-2} - t_{i-1}$ for $2\leqslant i\leqslant k$.
\end{theorem}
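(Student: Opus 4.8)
The plan is to prove the statement in two stages: first that $S$ spans $\mathcal{C}$ as a $\mathbb{Z}_{p^k}$-module, and then that it spans minimally, the second stage simultaneously producing the values of the $\gamma_i$. Throughout I will use that on $R_{\alpha,\beta}$ the scalar action of $c\in\mathbb{Z}_{p^k}$ is $c\,(u,v)=(\overline{c}\,u,\,c\,v)$ (the restriction of $\star$ to constants), together with two identities: writing $h_j=(x^\beta-1)/f_j$ and $\ell_j=f_j/f_{j+1}$ (a genuine polynomial since $f_{j+1}\mid f_j$), one has $\ell_j h_j=h_{j+1}$, and $h_j f\equiv\sum_{i=j+1}^{k-1}p^i h_j f_i\pmod{x^\beta-1}$, the latter because $f_j\mid f_i$ for $i\le j$ forces $h_j f_i\equiv 0$. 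In particular the $Y$-component of $(\overline{h_j}b,h_j f)$ has $p$-adic valuation exactly $j+1$, and $h_{k-1}f\equiv 0$, so $(\overline{h_{k-1}}b,h_{k-1}f)=(\overline{h_{k-1}}b,0)$ is a pure-$X$ vector.

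For spanning, take $c\in\mathcal{C}$ and write $c=d_1\star(a,0)+d_2\star(b,f)$ with $d_1\in\mathbb{Z}_p[x]$, $d_2\in\mathbb{Z}_{p^k}[x]$, as permitted by Theorem~\ref{generator polynomials of cyclic}. The term $(d_1 a,0)$ reduces modulo $x^\alpha-1$ to a $\mathbb{Z}_p$-combination of $T_1$, since $\{x^i a\}_{0\le i<\alpha-t_a}$ is a basis of the cyclic code $\langle a\rangle$. For $d_2\star(b,f)$ I run a telescoping division: dividing $d_2=Q_0 h_0+r_0$ with $\deg r_0<\beta-t_0$ gives $r_0\star(b,f)\in\langle T_2\rangle$ and pushes the rest onto $Q_0\star(\overline{h_0}b,h_0 f)$. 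I then iterate, dividing $Q_{j-1}=Q_j\ell_{j-1}+r_j$ with $\deg r_j<t_{j-1}-t_j$, so that $r_j\star(\overline{h_{j-1}}b,h_{j-1}f)\in\langle S_{j-1}\rangle$ and, via $\ell_{j-1}h_{j-1}=h_j$, the remainder is carried onto the next generator $(\overline{h_j}b,h_j f)$. Since $\deg Q_j<t_j$ at each stage, the process terminates at level $k-1$ with a leftover $Q_{k-1}\star(\overline{h_{k-1}}b,0)$; because $a\mid h_{k-1}b$ by Theorem~\ref{generator polynomials of cyclic}, this pure-$X$ vector lies in the ideal $\langle a\rangle$ and hence in $\langle T_1\rangle$. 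Therefore $S$ spans $\mathcal{C}$.

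For minimality and the type I record the additive order of each generator: each element of $T_1$ has order $p$ (it is over $\mathbb{Z}_p$), each of $T_2$ has order $p^k$ (the leading coefficient of $f$ at degree $t_0$ is a unit), and each of $S_j$ has order $p^{\,k-1-j}$ by the valuation computation above. I then establish independence by descending through $p$-adic levels: reducing the $Y$-components of a vanishing $\mathbb{Z}_{p^k}$-combination modulo $p$ annihilates every $S_j$-contribution and leaves a $\mathbb{Z}_p$-dependence among $\{x^i\overline{f_0}\}$, forcing the $T_2$-coefficients into $p\mathbb{Z}_{p^k}$; peeling off one power of $p$ at a time and using that within each level the leading degrees $(\beta-t_j)+t_{j+1}+i$ are distinct shows every coefficient acts trivially. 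Hence $|\mathcal{C}|=\prod_{s\in S}\mathrm{ord}(s)$ and no generator is redundant. Collecting the counts by order yields $\gamma_0=\alpha-t_a$ (from $T_1$), $\gamma_1=\beta-t_0$ (from $T_2$, order $p^k$), and $\gamma_{j+2}=t_j-t_{j+1}$ (from $S_j$, order $p^{\,k-1-j}$) for $0\le j\le k-2$, i.e. $\gamma_i=t_{i-2}-t_{i-1}$ for $2\le i\le k$, as claimed.

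The step I expect to be the main obstacle is controlling the interaction between the two coordinates during the telescoping reduction. I must verify that each division genuinely converts $(\overline{h_{j-1}}b,h_{j-1}f)$ into the next listed generator $(\overline{h_j}b,h_j f)$, which relies on $\ell_{j-1}h_{j-1}=h_j$ and on $\overline{\phantom{x}}$ commuting with multiplication; and, most delicately, that the terminal pure-$X$ leftover is absorbed by $T_1$ rather than enlarging the span, which is precisely where the hypothesis $a\mid h_{k-1}b$ is indispensable. A secondary subtlety is the bookkeeping in the independence argument, where the reductions modulo $x^\alpha-1$ and $x^\beta-1$ must not create spurious cancellations between the $X$- and $Y$-parts; isolating the $Y$-parts level by level through their $p$-adic valuations and leading degrees is what keeps this clean.
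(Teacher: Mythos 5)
Your proof is correct and follows essentially the same route as the paper: the same decomposition $c = d_1\star(a,0)+d_2\star(b,f)$, the same telescoping division through the chain $h_0, h_1,\cdots,h_{k-1}$ (the paper divides $q_j$ by $h_{j+1}/h_j = f_j/f_{j+1}$, exactly your $\ell_j$), the same absorption of the terminal pure-$X$ leftover into $\langle a\rangle$ via $a \mid h_{k-1}b$, and the same reading of the type from the additive orders $p^{k-1-j}$ of the $S_j$-generators. The one detail to tighten is your independence step: distinctness of the leading degrees $(\beta-t_j)+t_{j+1}+i$ holds within a single $S_j$ but not across different families at the same $p$-adic level (all their degree ranges terminate just below $\beta$), so you should instead evaluate the level-$m$ polynomial identity at the roots of $\overline{f_j}/\overline{f_{j+1}}$ and use that $x^\beta-1$ is squarefree since $\gcd(\beta,p)=1$; this is a repairable imprecision rather than a gap, and indeed the paper itself states minimality with no argument at all, so your sketch is already more detailed than the published proof.
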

\begin{proof}
    Let $c$ be a codeword of $\mathcal{C}$, then there exist polynomials $\lambda_1, \mu_1 \in \mathbb{Z}_{p^k}[x]$ such that
    \[ c = \overline{\lambda_1} \star \left(a, 0\right) + \mu_1 \star \left( b, f \right). \]
    \begin{enumerate}
        \item[(1)] If $\deg(\overline{\lambda_1}) \leqslant \alpha - \deg(a) - 1$, then $\overline{\lambda_1} \star (a, 0)$ can be spanned by $T_1$.
        Otherwise, by the division algorithm, we get two polynomials $\lambda_2 , \mu_2  \in \mathbb{Z}_{p^k}[x]$ such that
        \[ \overline{\lambda_1 } = \left(\frac{x^\alpha - 1}{a }\right) \overline{\lambda_2 } + \overline{\mu_2 }, \]
        where $\overline{\mu_2 } = 0$ or $\deg(\overline{\mu_2 }) \leqslant \alpha - \deg(a )-1$.
        Hence, \[ \overline{\lambda_1 } \star (a , 0) = \overline{\mu_2 } \star (a , 0). \]
        Thus, we can assume that $\overline{\lambda_1 } \star (a , 0)$ is spanned by $T_1$.

        \item[(2)] If $\deg(\mu_1) \leqslant \beta- \deg(f )-1 = \beta- t_0 - 1$, then $\mu_1  \star (b, f)$ is spanned by $T_2$.
        Otherwise, by the division algorithm, we get
        \[ \mu_1  = \left(\frac{x^\beta-1}{f_0 }\right) q_0  + r_0  = h_0 q_0  + r_0 , \]
        where $r_0  = 0$ or $\deg(r_0 ) \leqslant \beta - t_0 - 1$, which implies that
        \[ \mu_1  \star (b , f ) = q_0  \star (\overline{h_0} b , h_0 f ) + r_0  \star (b , f ). \]
        Obviously, $r_0  \star (b , f )$ is spanned by $T_2$.
        Thus, we only need to consider $q_0  \star (\overline{h_0} b , h_0 f )$.

        \item[(3)] The degree of $q_0 $ is important. If $\deg(q_0 ) \leqslant t_0-t_1-1$, then we are done.
        Otherwise, by division algorithm again, we have \[ q_0  = \left(\frac{x^\beta-1}{h_0 f_1 }\right) q_1  + r_1  = \frac{h_1}{h_0} q_1 + r_1, \]
        for some $q_1 , r_1  \in \mathbb{Z}_{p^k}[x]$ with $r_1  = 0$ or $\deg(r_1) \leqslant t_0 - t_1 -1$.
        Thus, \[ q_0 \star (\overline{h_0} b, h_0 f) = q_1 \star (\overline{h_1} b, h_1 f) + r_1 \star (\overline{h_0} b, h_0 f), \]
        which implies that $r_1 \star (\overline{h_0} b, h_0 f)$ is spanned by $S_0$.

        \item[(4)] As for $q_1$, we have similar discussion.
        For general, if $\deg(q_j ) \leqslant t_j-t_{j+1}-1$ for some $0\leqslant j\leqslant k-3$, then we are done.
        Otherwise, by division algorithm again, we have
        \[ q_j  = \left(\frac{x^\beta-1}{h_j f_{j+1} }\right) q_{j+1}  + r_{j+1}  = \frac{h_{j+1}}{h_j} q_{j+1} + r_{j+1}, \]
        for some $q_{j+1} , r_{j+1}  \in \mathbb{Z}_{p^k}[x]$ with $r_{j+1}  = 0$ or $\deg(r_{j+1}) \leqslant t_j - t_{j+1} -1$.
        Thus, \[ q_j \star (\overline{h_j} b, h_j f) = q_{j+1} \star (\overline{h_{j+1}} b, h_{j+1} f) + r_{j+1} \star (\overline{h_j} b, h_j f), \]
        which implies that $r_{j+1} \star (\overline{h_j} b, h_j f)$ is spanned by $S_j$.
        Note that $h_{k-2}  f  = p^{k-1} h_{k-2} f_{k-1} $ but $h_{k-1} f = h_{k-1} f_{k-1} = x^\beta-1 = 0$.
        Therefore, if $j = k-2$, then
        \[ q_{k-2} \star (\overline{h_{k-2}}b, h_{k-2}f) = q_{k-1} \star (\overline{h_{k-1}}b, 0) + r_{k-1} \star (\overline{h_{k-2}} b, h_{k-2} f). \]
        Since $a  \mid \frac{x^\beta-1}{f_{k-1} } b  = h_{k-1}  b $, then $q_{k-2} \star (\overline{h_{k-2}}b, h_{k-2}f)$ is spanned by $T_1$ and $S_{k-2}$.
    \end{enumerate}
    Therefore, $S = T_1 \cup T_2 \cup S_0 \cup S_1 \cup \cdots \cup S_{k-2}$ is a spanning set.
    Moreover, no element in $S$ is a linear combination of the other elements, so $S$ is minimal. On the other hand, the order of the codewords spanned by $S_j$ is just $p^{k-1-j}$.
    Then $\gamma_0 = \alpha - t_a$, $\gamma_1 = \beta - t_0$, and $\gamma_i = t_{i-2} - t_{i-1}$ for $2\leqslant i\leqslant k$.
\end{proof}

\begin{remark}
    If $k=2$, we get $\mathcal{C} = \langle (a, 0), (b, fh + pf) \rangle$ with $fhg = x^\beta - 1$, where $a,b\in \mathbb{Z}_p[x]$ and $f,g,h\in \mathbb{Z}_{p^2}[x]$.
    Moreover, if $p=2$, then Theorem \ref{generator polynomials of cyclic} and Theorem \ref{spanning set} are actually \cite[Theorem 11]{z2z4_additive_cyclic_codes}
    and \cite[Theorem 13]{z2z4_additive_cyclic_codes}, respectively.
\end{remark}

Straightforward from the proof \cite[Theorem 5 \& Proposition 6]{z2z4_generators} and Theorem \ref{spanning set}, it's easy to get the following result.
\begin{proposition}
	\label{thm: type and generator polynomials}
	Let $\mathcal{C} = \langle (a, 0), (b, fh + pf) \rangle$ be a $\mathbb{Z}_p \mathbb{Z}_{p^2}$-additive cyclic code of type
	$(\alpha, \beta; \gamma, \delta = \delta_1 + \delta_2; \kappa = \kappa_1 + \kappa_2)$ with $fhg = x^\beta - 1$ and $\gcd(\beta, p) = 1$.
	Then
	\[ \begin{aligned}
		& \gamma = \alpha - \deg(a) + \deg(h), \quad \delta = \deg(g), \quad \kappa = \alpha - \deg(\gcd(a, b\overline{g})), \\
		& \kappa_1 = \alpha - \deg(a), \quad \kappa_2 = \deg(a) - \deg(\gcd(a, b\overline{g})),
	\end{aligned} \]
	where $\overline{g} \equiv g \pmod{p}$.
\end{proposition}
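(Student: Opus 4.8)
The plan is to read off all five quantities from the minimal spanning set $S = T_1 \cup T_2 \cup S_0$ produced by Theorem \ref{spanning set} in the case $k=2$, after matching the two notations. Writing the $\mathbb{Z}_{p^2}$-generator polynomial $fh + pf$ in the form $f_0 + pf_1$ of Theorem \ref{spanning set}, I would identify $f_0 = fh$ and $f_1 = f$; the chain $f_1 \mid f_0 \mid (x^\beta - 1)$ is immediate from $fhg = x^\beta - 1$, and the cofactor $h_0$ defined by $h_0 f_0 = x^\beta - 1$ is exactly $g$. With $t_0 = \deg(fh)$ and $t_1 = \deg(f)$, the closing formulas $\gamma_0 = \alpha - \deg(a)$, $\gamma_1 = \beta - t_0$ and $\gamma_2 = t_0 - t_1$ of Theorem \ref{spanning set} are then already available.

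First I would settle $\gamma$ and $\delta$. Since for $k=2$ the code is isomorphic to $\mathbb{Z}_p^{\gamma_0 + \gamma_2} \times \mathbb{Z}_{p^2}^{\gamma_1}$, I read off $\gamma = \gamma_0 + \gamma_2 = (\alpha - \deg a) + (\deg(fh) - \deg f) = \alpha - \deg(a) + \deg(h)$ and $\delta = \gamma_1 = \beta - \deg(fh)$; taking degrees in $fhg = x^\beta - 1$ rewrites the latter as $\delta = \deg(g)$. For $\kappa_1$ I would use $\ker(\Psi) = \{(\vec{u}_1, \vec{0}) \in \mathcal{C}\}$ together with the ideal $I = \langle a\rangle$ from the discussion preceding Proposition \ref{Zpk[x] submodule}: since $a \mid (x^\alpha - 1) \pmod{p}$, the cyclic code $\langle a\rangle$ in $\mathbb{Z}_p[x]/(x^\alpha - 1)$ has dimension $\alpha - \deg(a)$, so $\kappa_1 = \alpha - \deg(a)$.

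The substance of the argument is $\kappa = \dim_{\mathbb{Z}_p}(\mathcal{C}_p)_X$. The key computation is $g(fh + pf) = (x^\beta - 1) + pgf \equiv pgf \pmod{(x^\beta - 1)}$, which shows that each generator $x^i \star (\overline{g}\,b, g(fh+pf))$ of $S_0$ equals $x^i \star (\overline{g}\,b, pgf)$ and hence has additive order $p$, while the generators of $T_1$ also have order $p$ and those of $T_2$ have order $p^2$ (because $p(fh + pf) = pfh \not\equiv 0$, as $fh$ is a nonzero divisor of the separable $x^\beta - 1 \bmod p$). Consequently $\mathcal{C}_p$ is the $\mathbb{Z}_p$-span of $T_1$, $S_0$ and $p\,T_2$; projecting onto $X$, the contribution of $p\,T_2$ vanishes (its $X$-part is $p\,b = 0$), those of $T_1$ give the shifts of $a$, and those of $S_0$ give the shifts of $\overline{g}\,b$. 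Because $\mathcal{C}$ is cyclic, $(\mathcal{C}_p)_X$ is an ideal of $\mathbb{Z}_p[x]/(x^\alpha - 1)$, so it equals $\langle a, b\overline{g}\rangle = \langle \gcd(a, b\overline{g})\rangle$; using that $a \mid (x^\alpha - 1)$ makes $\gcd(a, b\overline{g})$ a divisor of $x^\alpha - 1$, its dimension is $\alpha - \deg\gcd(a, b\overline{g})$. This yields $\kappa = \alpha - \deg\gcd(a, b\overline{g})$ and then $\kappa_2 = \kappa - \kappa_1 = \deg(a) - \deg\gcd(a, b\overline{g})$.

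The main obstacle I anticipate is the pair of inclusions establishing $(\mathcal{C}_p)_X = \langle \gcd(a, b\overline{g})\rangle$: I must argue that every order-$\leqslant p$ codeword is genuinely a $\mathbb{Z}_p$-combination of $T_1 \cup S_0 \cup p\,T_2$, so that no stray $X$-parts appear, and conversely that the finitely many shifts $x^i \overline{g}\,b$ together with $\langle a\rangle$ already generate the full ideal, the latter being automatic once cyclicity of $(\mathcal{C}_p)_X$ is invoked. Carrying the $\gcd$ inside the quotient ring $\mathbb{Z}_p[x]/(x^\alpha - 1)$ rather than in $\mathbb{Z}_p[x]$ is the only other delicate point, and it is resolved by the observation that $a \mid (x^\alpha - 1)$.
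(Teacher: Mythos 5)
Your proof is correct and takes essentially the same approach as the paper: the paper's proof is a one-line deferral to Theorem~\ref{spanning set} together with the proof of \cite[Theorem 5 \& Proposition 6]{z2z4_generators}, and your argument---identifying $f_0 = fh$, $f_1 = f$, $h_0 = g$ to read $\gamma$ and $\delta$ off the minimal spanning set, then computing $\kappa$ by showing $(\mathcal{C}_p)_X = \langle a, b\overline{g}\rangle = \langle \gcd(a, b\overline{g})\rangle$ via the order-$p$ generators $T_1 \cup S_0 \cup pT_2$ and cyclicity---is precisely that argument carried out for odd $p$. The only points the paper leaves implicit (that elements of $T_2$ have order $p^2$ because $\overline{fh}\neq 0$, and that the ideal property upgrades finitely many shifts of $a$ and $b\overline{g}$ to the full ideal) are exactly the ones you identify and resolve.
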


\begin{example}
	Let $\mathcal{C} = \langle (x^2-1, 0), (x-1, (x^2+5x+8)(x^2+1)(x+1) + 3(x^2+1)(x+1)) \rangle$ be a $\mathbb{Z}_3 \mathbb{Z}_9$-additive cyclic code with $\alpha = 4$ and $\beta = 8$.
	Here, $f = (x^2+1)(x+1)$, $h = x^2+5x+8$ and $g = (x-1)(x^2+4x+8)$, then its type is $(4, 8; 4, 3; 3)$.
	Moreover, by \Cref{spanning set}, the minimal spanning set of $\mathcal{C}$ is the set of polynomials corresponding to the rows of the following matrix:
	\[
	\left(
	\begin{array}{cccc|cccccccc}
		2 & 0 & 1 & 0 & 0 & 0 & 0 & 0 & 0 & 0 & 0 & 0 \\
		0 & 2 & 0 & 1 & 0 & 0 & 0 & 0 & 0 & 0 & 0 & 0 \\
		\hline
		2 & 1 & 0 & 0 & 2 & 7 & 8 & 8 & 6 & 1 & 0 & 0 \\
		0 & 2 & 1 & 0 & 0 & 2 & 7 & 8 & 8 & 6 & 1 & 0 \\
		0 & 0 & 2 & 1 & 0 & 0 & 2 & 7 & 8 & 8 & 6 & 1 \\
		\hline
		0 & 0 & 1 & 2 & 3 & 6 & 6 & 0 & 6 & 3 & 3 & 0 \\
		2 & 0 & 0 & 1 & 0 & 3 & 6 & 6 & 0 & 6 & 3 & 3 \\
	\end{array}
	\right).
	\]
	For instance, $(2100 \ 27886100)$ means $(x-1, x^5 + 6x^4 + 8x^3 + 8x^2 + 7x + 2)$.
	Moreover, this matrix can be regarded as a generator matrix of $\mathcal{C}$.
\end{example}

For a linear code over $\mathbb{Z}_p$, it's call \textit{optimal} if its parameter $[n, k, d]$ is the same as the best linear code (BKLC) in the database \cite{codetable}.
And, the code with parameter $[n, k, d]$ is called \textit{almost optimal} if the $[n, k, d+1]$ code exists and is optimal.
Many optimal and almost optimal codes are obtained by the Gray images, and the parameters are listed in \Cref{table: parameters of p-ary images,table: parameters of p-ary images of ZpZp2}.
Furthermore, the Gray images of many cyclic codes over $\mathbb{Z}_{p^2}$ obtained from $x^n-$ are nonlinear.
	
In \cite{codetable}, when fixed the length $n$, the dimension $k$, the best known linear codes have indirect, multi-step constructions in many cases.
Therefore, it's desirable to obtain them by another method, for example, in the form of the Gray images of cyclic codes over $\mathbb{Z}_{p^2}$ and $\mathbb{Z}_p \mathbb{Z}_{p^2}$.

However, notice that this paper is devoted to giving the structures for the cyclic codes whose Gray images are linear over $\mathbb{Z}_p$, rather than searching for new BKLCs.

\begin{table}[!t]
	\renewcommand{\arraystretch}{1.3}
	\caption{Gray images of cyclic codes over $\mathbb{Z}_p \mathbb{Z}_{p^2}$}
	\label{table: parameters of p-ary images of ZpZp2}
	\centering
	\begin{threeparttable}
		\begin{tabular}{c|c|c|c}
			\hline\hline
			$p$ & $(\alpha, \beta)$ & \textnormal{Generators} & \textnormal{Gary Image} \\
			\hline\hline
			$p=3$ &
			$(2, 5)$ & \tabincell{c}{$f = x^4 + x^3 + x^2 + x + 1$, $h=1$, \\
				$g=x-1$, $a = 0$, $b = x + 1$} & $[17, 2, 12]^{\dagger}$  \\
			\hline
			$p=3$ &
			$(8, 4)$ & \tabincell{c}{$f = x+1$, $h=x^2+1$, $g=x-1$, \\
				$a = 0$, $b = x^7 + 2x^6 + x^5 + x$} & $[20, 4, 12]^{\dagger}$ \\
			\hline
			$p=3$ &
			$(5, 8)$ & \tabincell{c}{$f = x^5 + 5x^4 + 4x^3 + 4x^2 + 3x+ 8$, $h=x^2 + 5x + 8$, \\ $g=x-1$, $a = 0$, $b = x^3 + x^2 + 2x + 2$} & $[29, 4, 18]^{\dagger}$ \\
			\hline
			$p=3$ &
			$(6, 8)$ & \tabincell{c}{$f = x^5 + 5x^4 + 4x^3 + 4x^2 + 3x+ 8$, $h=x^2 + 5x + 8$, \\ $g=x-1$, $a = 0$, $b = x^3 + 2x^2 + x + 1$} & $[30, 4, 19]^{\dagger}$ \\
			\hline\hline
			$p=5$ &
			$(3, 4)$ & \tabincell{c}{$f =x^2 + 19x + 18$, $h=x + 7$, $g=x-1$, \\
				$a = 0$, $b = 2x + 1$} & $[23, 3, 18]^{\dagger}$  \\
			\hline
			$p=5$ &
			$(6, 4)$ & \tabincell{c}{$f =x^2 + 19x + 18$, $h=x + 7$, $g=x-1$, \\
				$a = 0$, $b = x^3 + 2x^2 + 3x + 4$} & $[26, 3, 20]^{\dagger}$  \\
			\hline
			$p=5$ &
			$(5, 8)$ & \tabincell{c}{$f =x^6 + 8x^5 + 7x^4 + x^2 + 8x + 7$, $h=x + 7$, \\
				$g=x-1$, $a = 0$, $b = x^3 + 3x + 4$} & $[45, 3, 35]^{\dagger}$  \\
			\hline
			$p=5$ &
			$(6, 8)$ & \tabincell{c}{$f =x^6 + 8x^5 + 7x^4 + x^2 + 8x + 7$, $h=x + 7$, \\
				$g=x-1$, $a = 0$, $b = x^4 + 2x^3 + x^2 + 1$} & $[46, 3, 36]^{\dagger}$  \\
			\hline\hline
			$p=7$ &
			$(0, 6)$ & \tabincell{c}{$f=x^4 + 32x^3 + 13x^2 + 12x + 30$, \\
				$h = x+18$, $g = x-1$, 	$a=b=0$} & $[42, 3, 35]^{\dagger}$ \\
			\hline
			$p=7$ &
			$(4, 6)$ & \tabincell{c}{$f = x^4 + 32x^3 + 13x^2 + 12x + 30$, $h=x+18$, \\
				$g=x-1$, $a = 0$, $b = x^2 + 2x + 1$} & $[46, 3, 39]^{\dagger}$ \\
			\hline\hline
		\end{tabular}
		\begin{tablenotes}
			\item[$\dagger$] means the code is optimal.
			\item[$\ddagger$] means the code is almost optimal.
		\end{tablenotes}
	\end{threeparttable}
\end{table}

\begin{table}[!t]
	\renewcommand{\arraystretch}{1.3}
	\caption{Gray images of cyclic codes over $\mathbb{Z}_{p^2}$}
	\label{table: parameters of p-ary images}
	\centering
	\begin{threeparttable}
		\begin{tabular}{c|c|c|c|c|c}
			\hline\hline
			$p$ & $n$ & $f$ & $h$ & $g$ & \textnormal{Gray Image}  \\
			\hline\hline
			$p=3$ & $4$ & $(x+1)$ & $(x^2+1)$ & $(x-1)$ & $[12, 4, 6]^{\dagger}$  \\
			\hline
			$p=3$ & $5$ & $(x^4+x^3+x^2+x+1)$ & $1$ & $(x-1)$ & $[15, 2, 10]^{\ddagger}$  \\
			\hline
			$p=3$ & $7$ & $(x^6+x^5+x^4+x^3+x^2+x+1)$ & $1$ & $(x-1)$ & $[21, 2, 14]^{\ddagger}$  \\
			\hline
			$p=3$ & $8$ & $(x+1)(x^2+1)(x^2+4x+8)$ & $(x^2+5x+8)$ & $(x-1)$ & $[24, 4, 15]^{\dagger}$  \\
			\hline
			$p=3$ & $8$ & $(x+1)(x^2+1)(x^2+5x+8)$ & $(x^2+4x+8)$ & $(x-1)$ & $[24, 4, 15]^{\dagger}$  \\
			\hline\hline
			$p=5$ & $4$ & $(x+1)(x+18)$ & $(x+7)$ & $(x-1)$ & $[20, 3, 15]^{\dagger}$  \\
			\hline
			$p=5$ & $4$ & $(x+1)(x+7)$ & $(x+18)$ & $(x-1)$ & $[20, 3, 15]^{\dagger}$ \\
			\hline
			$p=5$ & $4$ & $(x+1)(x+7)(x+18)$ & $1$ & $(x-1)$ & $[20, 2, 16]^{\dagger}$ \\
			\hline
			$p=5$ & $7$ & $(x^6 + x^5 + x^4 + x^3 + x^2 + x + 1)$ & $1$ & $(x-1)$ & $[35, 2, 28]^{\ddagger}$ \\
			\hline
			$p=5$ & $8$ & $(x+1)(x+7)(x^2+7)(x^2-7)$ & $(x-7)$ & $(x-1)$ & $[40, 3, 30]^{\ddagger}$ \\
			\hline
			$p=5$ & $8$ & $(x+1)(x-7)(x^2+7)(x^2-7)$ & $(x+7)$ & $(x-1)$ & $[40, 3, 30]^{\ddagger}$ \\
			\hline
			$p=5$ & $8$ & $(x^7 + x^6 + x^5 + x^4 + x^3 + x^2 + x + 1)$ & $1$ & $(x-1)$ & $[40, 2, 32]^{\ddagger}$  \\
			\hline\hline
		\end{tabular}
		\begin{tablenotes}
			\item[$\dagger$] means the code is optimal.
			\item[$\ddagger$] means the code is almost optimal.
		\end{tablenotes}
	\end{threeparttable}
\end{table}

\section{\texorpdfstring{$p$}{}-ary Images of \texorpdfstring{$\mathbb{Z}_p\mathbb{Z}_{p^2}$}{}-Additive Cyclic Codes}
\label{sec:Gray image of cyclic codes}

In \cite{binary_images_of_z2z4_cyclic} and \cite{binary_images_of_z4_cyclic}, the authors studied the Gray images of the cyclic codes over $\mathbb{Z}_4$ and $\mathbb{Z}_2 \mathbb{Z}_4$, respectively.
This section is devoted to generalizing the results about Gray images to $\mathbb{Z}_{p^2}$ and $\mathbb{Z}_{p} \mathbb{Z}_{p^2}$, respectively.
In \cite{wan1_Z4-code}, the Gray map $\Phi$ defined for quaternary codes satisfies $\Phi(\vec{u} + \vec{v}) = \Phi(\vec{u}) + \Phi(\vec{v}) + \Phi(2 \vec{u} \ast \vec{v})$,
where $\ast$ denotes the \textbf{componentwise product} of two vectors.
As for the codes over $\mathbb{Z}_{p^2}$ and $\mathbb{Z}_p \mathbb{Z}_{p^2}$, we have a similar result.
Now, we reprove \cite[Lemma 2]{LXX} by \cite[Lemma 2]{wsk_z3z9_linear_rank_kernel} and a slightly different proof.

\begin{lemma}\label{phi(u+v) = phi(u)+phi(v)+...}
    Let $\vec{u} = (u_1, \cdots, u_{\alpha + \beta}), \vec{v} = (v_1, \cdots, v_{\alpha + \beta}) \in \mathbb{Z}_p^{\alpha} \times \mathbb{Z}_{p^2}^{\beta}$, then
    \begin{equation} \label{eq: Phi(u+v)=Phi(u)+Phi(v)+Phi(pP(u,v))}
        \Phi(\vec{u} + \vec{v}) = \Phi(\vec{u}) + \Phi(\vec{v}) + \Phi(pP(\vec{u}, \vec{v})),
    \end{equation}
    where $P(\vec{u}, \vec{v}) = \left( P(u_1, v_1), \cdots, P(u_{\alpha+\beta}, v_{\alpha+\beta}) \right)$ and
    \begin{equation}\label{eq: P(u,v)}
    	P(u_i, v_i) \equiv \sum_{a=1}^{p-1} \sum_{b=p-a}^{p-1} \frac{\prod_{m=0}^{p-1} (u_i-m)(v_i-m)}{(u_i-a)(v_i-b)} \pmod{p} =
    	\begin{cases}
    		1, & \overline{u}_i + \overline{v}_i \geqslant p, \\
    		0, & \overline{u}_i + \overline{v}_i < p, \\
    	\end{cases}
    \end{equation}
	with $\overline{u}_i \equiv u_i \pmod{p}$, $\overline{v}_i \equiv v_i \pmod{p}$.
\end{lemma}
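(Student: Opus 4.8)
The plan is to verify the identity \eqref{eq: Phi(u+v)=Phi(u)+Phi(v)+Phi(pP(u,v))} one coordinate at a time, exploiting that $\Phi$ acts independently on each of the $\alpha+\beta$ positions: it is the identity on the $\alpha$ coordinates over $\mathbb{Z}_p$ and applies $\phi$ to each of the $\beta$ coordinates over $\mathbb{Z}_{p^2}$. On a $\mathbb{Z}_p$-coordinate the correction term drops out, since $p=0$ in $\mathbb{Z}_p$ forces $\Phi(pP(u_i,v_i))=\Phi(0)=0$, and the identity is immediate. Hence it suffices to prove, for $u,v\in\mathbb{Z}_{p^2}$, that $\phi(u+v)=\phi(u)+\phi(v)+\phi\bigl(pP(u,v)\bigr)$.

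For this single-coordinate statement I would write $u=u_0p+u_1$ and $v=v_0p+v_1$ in base $p$, so that $u_1=\overline{u}$ and $v_1=\overline{v}$, and record that $\phi(w)=w_0\vec{1}+w_1\vec{e}$ over $\mathbb{Z}_p$, where $\vec{1}=(1,\dots,1)$ and $\vec{e}=(0,1,\dots,p-1)$. The argument then splits on whether a carry occurs. If $u_1+v_1<p$, the canonical base-$p$ digits of $u+v$ are $\bigl((u_0+v_0)\bmod p,\,u_1+v_1\bigr)$, so both sides equal $(u_0+v_0)\vec{1}+(u_1+v_1)\vec{e}$ and $P(u,v)=0$. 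If $u_1+v_1\geqslant p$, then $u+v=(u_0+v_0+1)p+(u_1+v_1-p)$; reducing the $\vec{e}$-coefficient modulo $p$ replaces $u_1+v_1$ by $u_1+v_1-p$, while the extra unit needed in the high digit is supplied precisely by $\phi(p)=\vec{1}$, and indeed $P(u,v)=1$. The two facts that make this computation go through are $\phi(p)=\vec{1}$ and that scalar coefficients may be freely reduced modulo $p$ because they live in $\mathbb{Z}_p$.

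It then remains to confirm that the explicit double sum in \eqref{eq: P(u,v)} really equals this carry indicator. Over $\mathbb{F}_p$ one has $\prod_{m=0}^{p-1}(X-m)=X^p-X$, and for a fixed $a$ the product $\prod_{m\neq a}(X-m)$ reduces modulo $p$ to $0$ unless $\overline{u}=a$, in which case it equals $\prod_{m\neq a}(a-m)=(p-1)!\equiv-1$ by Wilson's theorem; the same holds for the $v$-factor. Consequently each summand is congruent modulo $p$ to the Kronecker delta product $\delta_{\overline{u},a}\,\delta_{\overline{v},b}$, so the double sum counts the pairs $(a,b)$ with $a=\overline{u}$, $b=\overline{v}$ subject to $1\leqslant a\leqslant p-1$ and $p-a\leqslant b\leqslant p-1$. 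Such a pair exists (and is then unique) exactly when $\overline{u}+\overline{v}\geqslant p$, which yields the claimed case formula. I expect the main obstacle to be this last step: one must match the summation range $p-a\leqslant b\leqslant p-1$ to the carry condition and check that exactly one term of the sum survives, whereas the carry bookkeeping of the previous paragraph is the conceptual core but routine once the digits are written out.
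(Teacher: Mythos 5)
Your proposal is correct, and its second half coincides with the paper's own proof: the paper, too, evaluates the double sum by writing $\prod_{m=0}^{p-1}(X-m)\equiv X^p-X\pmod{p}$, observing that every summand vanishes unless $(\overline{u}_i,\overline{v}_i)=(a,b)$, invoking Wilson's theorem $(p-1)!\equiv-1\pmod{p}$ so that the unique surviving term equals $(-1)(-1)=1$, and finally noting $P(x,y)=P(\overline{x},\overline{y})$ for $x,y\in\mathbb{Z}_{p^2}$. Where you genuinely differ is the first half. The paper opens with ``it's sufficient to show the values of $P(u_i,v_i)$'' because it imports the Gray-map identity with the carry indicator from prior work (\cite[Lemma 2]{LXX} and \cite[Lemma 2]{wsk_z3z9_linear_rank_kernel}); once $P$ is shown to be that indicator, the lemma follows from the cited result. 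You instead prove the identity from scratch: the correction term dies on the $\mathbb{Z}_p$-coordinates since $pP(u_i,v_i)=0$ there, and on each $\mathbb{Z}_{p^2}$-coordinate you split into the no-carry and carry cases of base-$p$ addition, using $\phi(p)=(1,\dots,1)$ and the fact that the digit coefficients may be reduced modulo $p$; this bookkeeping checks out (in the carry case both sides equal $(u_0+v_0+1)\vec{1}+(u_1+v_1-p)\vec{e}$). What your route buys is self-containedness---the lemma no longer leans on external references---at the price of a page of routine digit verification; what the paper's route buys is brevity, since its only new content relative to the literature is that this particular lower-degree polynomial computes the same carry function, which is exactly the part both of you prove identically.
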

\begin{proof}
    It's sufficient to show the values of $P(u_i, v_i)$ in \eqref{eq: P(u,v)}.
    Note that \[ \sum_{a=1}^{p-1} \sum_{b=p-a}^{p-1} \frac{\prod_{m=0}^{p-1} (x-m)(y-m)}{(x-a)(y-b)} \equiv \sum_{a+b \geqslant p} \frac{(x^p-x)(y^p-y)}{(x-a)(y-b)} \pmod{p}, \]
    since $x^p - x \equiv \prod_{m=0}^{p-1} (x-m) \pmod{p}$ when $p$ is prime.
    For $0 \leqslant x, y \leqslant p-1$, if $x \neq a$ or $y \neq b$, then \[ \frac{(x^p-x)(y^p-y)}{(x-a)(y-b)} \equiv 0 \pmod{p}. \]
    Then, according to the famous Wilson's Theorem: $(p-1)! \equiv -1 \pmod{p}$, we have
    \[ \sum_{a+b \geqslant p} \frac{(x^p-x)(y^p-y)}{(x-a)(y-b)} \equiv \sum_{\substack{a+b \geqslant p,\\x=a,y=b}} \frac{(x^p-x)(y^p-y)}{(x-a)(y-b)} \pmod{p} = \begin{cases}
    	1, & x + y \geqslant p, \\
    	0, & x + y < p. \\
    \end{cases} \]
	The equality \eqref{eq: P(u,v)} holds since $P(x, y) = P(\bar{x}, \bar{y})$ when $x, y \in \mathbb{Z}_{p^2}$.
\end{proof}

\begin{lemma} \label{lemma: Phi(C) is linear iff pP(u,v) in C}
    Let $\mathcal{C}$ be a $\mathbb{Z}_p\mathbb{Z}_{p^2}$-additive code.
    The $\mathbb{Z}_p\mathbb{Z}_{p^2}$-linear code $C = \Phi(\mathcal{C})$ is linear if and only if $pP(\vec{u},\vec{v}) \in \mathcal{C}$ for all $\vec{u},\vec{v} \in \mathcal{C}$.
\end{lemma}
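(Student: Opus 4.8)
The plan is to run both directions off the single structural identity in Lemma \ref{phi(u+v) = phi(u)+phi(v)+...}, namely $\Phi(\vec{u}+\vec{v}) = \Phi(\vec{u}) + \Phi(\vec{v}) + \Phi(pP(\vec{u},\vec{v}))$, using in addition only that $\Phi$ is a bijection onto $C$ and that over the prime field $\mathbb{Z}_p$ a subset of $\mathbb{Z}_p^n$ closed under addition is automatically a linear subspace (every scalar multiple $\lambda\vec{c}$ is an iterated sum of $\vec{c}$).

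For the forward direction I would assume $C = \Phi(\mathcal{C})$ is linear, fix $\vec{u},\vec{v}\in\mathcal{C}$, and rearrange the identity of Lemma \ref{phi(u+v) = phi(u)+phi(v)+...} into
\[ \Phi(pP(\vec{u},\vec{v})) = \Phi(\vec{u}+\vec{v}) - \Phi(\vec{u}) - \Phi(\vec{v}). \]
Since $\mathcal{C}$ is additive we have $\vec{u}+\vec{v}\in\mathcal{C}$, so all three terms on the right lie in $C$; linearity of $C$ then forces the right-hand side into $C$, whence $\Phi(pP(\vec{u},\vec{v}))\in C=\Phi(\mathcal{C})$. As $\Phi$ is a bijection onto $C$, its preimage $pP(\vec{u},\vec{v})$ must lie in $\mathcal{C}$, which is exactly the desired conclusion.

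For the converse I would first isolate the key auxiliary fact: if $\vec{e}$ has all its $\mathbb{Z}_{p^2}$-coordinates in $p\mathbb{Z}_{p^2}$, then its reduction modulo $p$ vanishes, so $P(\vec{w},\vec{e})=0$ for every $\vec{w}$ by \eqref{eq: P(u,v)}; plugging this into Lemma \ref{phi(u+v) = phi(u)+phi(v)+...} collapses it to genuine additivity $\Phi(\vec{w}+\vec{e}) = \Phi(\vec{w}) + \Phi(\vec{e})$, and in particular $\Phi(-\vec{e}) = -\Phi(\vec{e})$. Now given $\vec{u},\vec{v}\in\mathcal{C}$, set $\vec{w} = \vec{u}+\vec{v}-pP(\vec{u},\vec{v})$, which lies in $\mathcal{C}$ because $\vec{u}+\vec{v}\in\mathcal{C}$ and $pP(\vec{u},\vec{v})\in\mathcal{C}$ by hypothesis. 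Since the correction vector $-pP(\vec{u},\vec{v})$ has all its $\mathbb{Z}_{p^2}$-entries in $p\mathbb{Z}_{p^2}$, the auxiliary fact applies and, combined with Lemma \ref{phi(u+v) = phi(u)+phi(v)+...} applied to $\vec{u}+\vec{v}$, yields
\[ \Phi(\vec{w}) = \Phi(\vec{u}+\vec{v}) - \Phi(pP(\vec{u},\vec{v})) = \Phi(\vec{u}) + \Phi(\vec{v}). \]
Hence $\Phi(\vec{u})+\Phi(\vec{v})=\Phi(\vec{w})\in C$, so $C$ is closed under addition, and by the prime-field remark above $C$ is linear.

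The main obstacle is the converse, and more precisely the bookkeeping that the correction term $pP(\vec{u},\vec{v})$ behaves additively under $\Phi$. The whole subtlety is that $\Phi$ fails to be a homomorphism only through the term $\Phi(pP(\cdot,\cdot))$, and that this term is itself harmless because $pP$ lands in $p\mathbb{Z}_{p^2}^{\beta}$, precisely the region where the Gray map is additive. I would therefore prove the auxiliary additivity claim (directly from $P(\vec{w},\vec{e})=0$) as a separate preliminary step before assembling the final chain of equalities, so that the argument does not get tangled in the nonlinearity of $\Phi$ on generic inputs.
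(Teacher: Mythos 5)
Your proposal is correct and takes essentially the same approach as the paper: the forward direction is identical (rearrange the identity of Lemma \ref{phi(u+v) = phi(u)+phi(v)+...} and invoke injectivity of $\Phi$), and in the converse your codeword $\vec{u}+\vec{v}-pP(\vec{u},\vec{v})$ is literally the paper's $\vec{u}'+\vec{v}'+(p-1)pP(\vec{u}',\vec{v}')$, since $(p-1)p \equiv -p \pmod{p^2}$. Your only addition is to state explicitly the auxiliary fact (additivity of $\Phi$ when one argument has all entries in $p\mathbb{Z}_{p^2}$) that the paper uses implicitly, which is a harmless clarification rather than a different method.
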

\begin{proof}
    Assume that $C$ is linear, then for any $\vec{u}, \vec{v} \in \mathcal{C}$, we have  $\Phi(\vec{u}), \Phi(\vec{v}), \Phi(\vec{u} + \vec{v}) \in C$.
    Thus, by Lemma \ref{phi(u+v) = phi(u)+phi(v)+...}, $\Phi(pP(\vec{u}, \vec{v})) \in C$.
    Considering that $\phi(px) = \phi(py)$ if and only if $px = py$, then $pP(\vec{u}, \vec{v}) \in \mathcal{C}$, which is important and occurs many times in this paper.

    Conversely, for every $\vec{x}, \vec{y}\in C$, there exist $\vec{u}', \vec{v}' \in \mathcal{C}$ such that $\vec{x} = \Phi(\vec{u}')$ and $\vec{y} = \Phi(\vec{v}')$.
    Thus, $pP(\vec{u}', \vec{v}') \in \mathcal{C}$, which implies that $\vec{u}' + \vec{v}' + (p-1)pP(\vec{u}', \vec{v}') \in \mathcal{C}$.
    Hence, by Lemma \ref{phi(u+v) = phi(u)+phi(v)+...},
    \[ \Phi(\vec{u}' + \vec{v}' + (p-1)pP(\vec{u}', \vec{v}')) = \Phi(\vec{u}' + \vec{v}') + (p-1)\Phi(pP(\vec{u}', \vec{v}')) \in C. \]
    Thus,
    \[
        \begin{aligned}
            \vec{x} + \vec{y} & = \Phi(\vec{u}') + \Phi(\vec{v}') = \Phi(\vec{u}') + \Phi(\vec{v}') + p \Phi(pP(\vec{u}', \vec{v}')) \\
            & = \Phi(\vec{u}' + \vec{v}') + (p-1)\Phi(pP(\vec{u}', \vec{v}')) \in C,
        \end{aligned}
    \]
    and we complete the proof.
\end{proof}

\begin{remark}
	It's important that $\Phi(pP(\vec{u}, \vec{v})) \in \Phi(\mathcal{C})$ if and only if $pP(\vec{u}, \vec{v}) \in \mathcal{C}$.
	Recall the characteristic function of $S$ defined in \cite[Section IV]{LS_Gray_map} as:
	\[  f(x,y) = \left( (x^p - x)(y^p - y) \sum_{i=1}^{p-1} \frac{1}{x-i} \prod_{(i,j)\in S} \frac{1}{y-j} \right) ^{p-1}.  \]
	where $S$ is the set of the ordered pairs $(i,j) \in \mathbb{Z}_p^2$ with $i + j \geqslant p$.
	Then $f(x, y) \equiv \left( P(x, y) \right)^{p-1} \pmod{p}$.
	Furthermore, let $L = \{1+2i | 1 \leqslant i \leqslant (p-1)/2 \} \cup \{ p-1 \}$ by $L$, which gives the degrees of terms of $P(x,y)$ by \cite[Lemma 4]{LXX}.
\end{remark}

\subsection{\texorpdfstring{$p$}{}-ary Images of Cyclic Codes over \texorpdfstring{$\mathbb{Z}_{p^2}$}{}}\label{subsec:Gray image of Zp2 cyclic codes}

Note that in \cite{LS_Gray_map}, linear cyclic codes over $\mathbb{Z}_{p^2}$ whose Gray images are linear cyclic codes over $\mathbb{Z}_p$ are characterized.
In this subsection, we will give the necessary and sufficient condition of the Gray image to be linear (not necessarily cyclic) of some cyclic code over $\mathbb{Z}_{p^2}$.

%

In this section, the elements of the cyclic code $\mathcal{C} = \langle f h  + pf  \rangle$ over $\mathbb{Z}_{p^2}$ are always regarded as polynomials,
where $fhg = x^n - 1$ and $\gcd(n,p)=1$.
Moreover, it's easy to check that $\mathcal{C} = \langle fh, pfg \rangle = \langle fh, pf \rangle$ (see \Cref{3 generators of zpzp2 cyclic codes} or \cite[Proposition 5]{binary_images_of_z4_cyclic} for binary case), since $h$ and $g$ are coprime.
Then the subcode $\mathcal{C}_p$ of $\mathcal{C}$ whose codewords are of order $p$ (and the zero polynomial is in $\mathcal{C}_p$) is also a cyclic code,
and $\mathcal{C}_p = \langle pf \rangle$ (see \Cref{generators of C_p} or \cite[Proposition 6]{binary_images_of_z4_cyclic} for binary case).

Recall some results in \cite{book_of_MacWilliams_in_1977}.
Let $\ast$ denote the \textbf{componentwise product} of two vectors.
For convenience, the componentwise product of $i$ vectors $\vec{u}$ is written as $\vec{u}^i$.
Moreover, the componentwise product $a \ast b$ of two polynomials $a$ and $b$ is defined as: \[ a \ast b = \left(\sum_{i=0}^m a_ix^i\right) \ast \left(\sum_{i=0}^m b_ix^i\right) = \sum_{i=0}^m a_i b_i x^i. \]
And the componentwise product of two sets $A, B$ is defined as: $A \ast B = \{ a\ast b | a\in A, b\in B \}$.

From now on, denote the principal ideal generated by $f \in \mathbb{Z}_p[x]/(x^n-1)$ with $\gcd(n,p)=1$ by $\langle f \rangle_p^n$, then we have

\begin{proposition}
	Let $\mathcal{C} = \langle fh + pf \rangle$ be a cyclic code over $\mathbb{Z}_{p^2}$ of length $n$ with $fhg  = x^n-1$ and $\gcd(n, p) = 1$,
	then for every positive integer $i$, we have
	\[ pI^i = p \underbrace{\langle fh \rangle \ast \dots \ast \langle fh \rangle}_{i} = \{ p \underbrace{v_1 \ast \dots \ast v_i}_i | v_1, \dots, v_i \in \mathcal{C} \}, \]
	and
	\[ I_i = \underbrace{\langle \overline{fh} \rangle_p^n \ast \dots \ast \langle \overline{fh} \rangle_p^n}_{i} = \{ \underbrace{\overline{v_1} \ast \dots \ast \overline{v_i}}_i | v_1, \dots, v_i \in \mathcal{C} \}. \]
\end{proposition}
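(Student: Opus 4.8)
The plan is to read the two displayed chains as definitions of $pI^i$ and $I_i$ (through the first equality sign in each line) and to prove the substantive content, namely the second equality sign, by establishing both inclusions. Throughout I will use the identification $\mathcal{C} = \langle fh, pf\rangle$ recorded just before the statement, together with the evident inclusion $\langle fh\rangle \subseteq \mathcal{C}$.

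For the first identity, I would first observe that $p\langle fh\rangle^{\ast i} \subseteq \{p(v_1 \ast \cdots \ast v_i) : v_j \in \mathcal{C}\}$ is immediate, since any element $w_j = a_j fh$ of $\langle fh\rangle$ already lies in $\mathcal{C}$, so $p(w_1 \ast \cdots \ast w_i)$ already has the required shape. For the reverse inclusion I would write each codeword as $v_j = a_j fh + p b_j f$ with $a_j, b_j \in \mathbb{Z}_{p^2}[x]/(x^n-1)$, expand the componentwise product $v_1 \ast \cdots \ast v_i$ into its $2^i$ terms, and then multiply by the leading factor $p$. The key point is that any term containing at least one factor $p b_j f$ acquires, after multiplication by the leading $p$, an overall factor $p^2 = 0$ in $\mathbb{Z}_{p^2}$, so all such terms vanish; only the pure term survives, yielding $p(v_1 \ast \cdots \ast v_i) = p(a_1 fh \ast \cdots \ast a_i fh) \in p\langle fh\rangle^{\ast i}$.

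For the second identity, the preliminary step is to check that $\overline{\mathcal{C}} = \langle \overline{fh}\rangle_p^n$: reducing the generator $fh + pf$ modulo $p$ gives $\overline{fh}$, and since the coefficient reduction $\mathbb{Z}_{p^2} \to \mathbb{Z}_p$ is surjective, the image of $\mathcal{C}$ is exactly the principal ideal generated by $\overline{fh}$ in $\mathbb{Z}_p[x]/(x^n-1)$. Granting this, one inclusion holds because each $\overline{v_j} \in \overline{\mathcal{C}} = \langle \overline{fh}\rangle_p^n$, so $\overline{v_1} \ast \cdots \ast \overline{v_i} \in \langle\overline{fh}\rangle_p^{n\ast i}$; the reverse inclusion holds by lifting, since any $\ast$-factor $w_j \in \langle\overline{fh}\rangle_p^n = \overline{\mathcal{C}}$ equals $\overline{v_j}$ for some $v_j \in \mathcal{C}$.

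The step I expect to be the main obstacle — or rather the point most likely to trip up a careless argument — is that the componentwise product $\ast$ does not commute with polynomial multiplication, so one cannot rewrite $a_1 fh \ast \cdots \ast a_i fh$ as a single polynomial multiple of $(fh)^{\ast i}$. My argument avoids this pitfall: membership in $\langle fh\rangle^{\ast i}$ only requires each $\ast$-factor to lie in the ideal $\langle fh\rangle$, which is automatic. Thus the whole proof rests on the nilpotency bookkeeping $p^2 = 0$ in the first identity and on the surjectivity of reduction modulo $p$ in the second, with no genuine computation left.
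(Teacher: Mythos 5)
Your proposal is correct and takes essentially the same approach as the paper: the paper's one-line proof rests on precisely the two facts your argument establishes, namely that $pfh \mid pv$ (your $p^2 = 0$ bookkeeping, which kills every cross term involving $pb_jf$) and $\overline{fh} \mid \overline{v}$ (your reduction-modulo-$p$/surjectivity step) for every codeword $v \in \mathcal{C}$. You have simply written out the inclusions that the paper leaves implicit.
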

\begin{proof}
	The equalities hold since for every polynomial (codeword) $v \in \mathcal{C}$, $pfh \mid pv$ and $\overline{fh} \mid \overline{v}$.
\end{proof}

\begin{proposition} \label{prop: I_2+I_m in <f>}
    Let $\mathcal{C} = \langle fh + pf \rangle$ be a cyclic code over $\mathbb{Z}_{p^2}$ of length $n$ with $fhg  = x^n-1$ and $\gcd(n, p) = 1$,
    then the following statements are equivalent:
    \begin{enumerate}
    	\item[(1)] $\Phi(\mathcal{C})$ is a $p$-ary linear code,
    	\item[(2)] $pP(\overline{r_1 fh}, \overline{r_2 fh}) \in \langle pf \rangle$ for any $r_1, r_2 \in \mathbb{Z}_{p^2}[x]$,
    	\item[(3)] $pI^{\ell} \subseteq \langle pf \rangle$ (or equivalently, $I_{\ell} \subseteq \langle \overline{f} \rangle_p^n$) for every $\ell \in L = \{1+2i | 1 \leqslant i \leqslant (p-1)/2 \} \cup \{ p-1 \}$.
    \end{enumerate}
\end{proposition}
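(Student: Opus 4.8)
The plan is to treat $(1)\Leftrightarrow(2)$ as a direct specialization of the general linearity criterion and to reserve the real effort for $(2)\Leftrightarrow(3)$, where the combinatorial structure of $P$ enters. For $(1)\Leftrightarrow(2)$ I would start from Lemma~\ref{lemma: Phi(C) is linear iff pP(u,v) in C}: $\Phi(\mathcal{C})$ is linear precisely when $pP(\vec{u},\vec{v})\in\mathcal{C}$ for all $\vec{u},\vec{v}\in\mathcal{C}$. Writing a codeword as $\vec{u}=r_1(fh+pf)$ with $r_1\in\mathbb{Z}_{p^2}[x]$, its $p$-reduction is $\overline{\vec{u}}=\overline{r_1 fh}$, and since $P$ depends only on the reductions of its arguments by Lemma~\ref{phi(u+v) = phi(u)+phi(v)+...}, the family $\{P(\vec{u},\vec{v}):\vec{u},\vec{v}\in\mathcal{C}\}$ coincides with $\{P(\overline{r_1 fh},\overline{r_2 fh}):r_1,r_2\in\mathbb{Z}_{p^2}[x]\}$. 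Because any vector of the form $pw$ has additive order dividing $p$, the membership $pP(\vec{u},\vec{v})\in\mathcal{C}$ is equivalent to $pP(\vec{u},\vec{v})\in\mathcal{C}_p=\langle pf\rangle$. These three observations turn the criterion of Lemma~\ref{lemma: Phi(C) is linear iff pP(u,v) in C} verbatim into~(2).

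For $(2)\Leftrightarrow(3)$ the key input is that, as a polynomial, $P(x,y)$ has only monomials of total degree in $L$ (the Remark, together with \cite[Lemma 4]{LXX}); decompose $P=\sum_{\ell\in L}P^{(\ell)}$ into its homogeneous components. Substituting $x=\overline{r_1 fh}$ and $y=\overline{r_2 fh}$ componentwise, each monomial of $P^{(\ell)}$ becomes an $\ell$-fold componentwise product of elements of $\langle\overline{fh}\rangle_p^n$ and hence lies in $I_\ell$ by the preceding proposition, so $pP^{(\ell)}(\overline{r_1 fh},\overline{r_2 fh})\in pI^\ell$. The implication $(3)\Rightarrow(2)$ is then immediate: if every $pI^\ell\subseteq\langle pf\rangle$, summing over $\ell\in L$ gives $pP(\overline{r_1 fh},\overline{r_2 fh})\in\langle pf\rangle$. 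For the converse I would separate the homogeneous components by scaling: replacing $(r_1,r_2)$ by $(\lambda r_1,\lambda r_2)$ for a nonzero scalar $\lambda\in\mathbb{Z}_p$ and using homogeneity gives $p\sum_{\ell\in L}\lambda^\ell P^{(\ell)}(\overline{r_1 fh},\overline{r_2 fh})\in\langle pf\rangle$ for every such $\lambda$. Since the exponents $\ell\in L$ are pairwise distinct modulo $p-1$, the associated Vandermonde system over $\mathbb{Z}_p$ is invertible, and each $pP^{(\ell)}(\overline{r_1 fh},\overline{r_2 fh})$ individually lands in $\langle pf\rangle$.

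The remaining step, which I expect to be the main obstacle, is to upgrade ``$pP^{(\ell)}(\overline{r_1 fh},\overline{r_2 fh})\in\langle pf\rangle$ for all $r_1,r_2$'' to the full inclusion $pI^\ell\subseteq\langle pf\rangle$. The difficulty is that $P^{(\ell)}(\overline{r_1 fh},\overline{r_2 fh})$ only produces products built from the two base vectors $\overline{r_1 fh}$ and $\overline{r_2 fh}$, whereas $I_\ell$ is spanned by products $\overline{v_1}\ast\cdots\ast\overline{v_\ell}$ of $\ell$ independent codewords. I would use the explicit description of $P^{(\ell)}$ in \cite[Lemma 4]{LXX} to locate a monomial $x^a y^{\ell-a}$ with nonzero coefficient, isolate $p\,(\overline{r_1 fh})^{\ast a}\ast(\overline{r_2 fh})^{\ast(\ell-a)}\in\langle pf\rangle$ by a further scaling and interpolation argument in $r_1$ alone, and then recover the general multilinear product by polarizing $w\mapsto w^{\ast a}$ and $z\mapsto z^{\ast(\ell-a)}$ over $\langle\overline{fh}\rangle_p^n$. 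The delicate point is the characteristic-$p$ arithmetic: polarization divides by factorials up to $a!$ and $(\ell-a)!$, so one must verify that no exponent forces a factor of $p$ into the denominator, using the identity $w^{\ast p}=w$ over $\mathbb{Z}_p$ to treat the extremal degree $\ell=p$. Confirming that these products exhaust $I_\ell$ completes $(2)\Rightarrow(3)$.
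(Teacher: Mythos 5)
Your first two paragraphs reproduce the paper's own argument: the equivalence $(1)\Leftrightarrow(2)$ via \Cref{lemma: Phi(C) is linear iff pP(u,v) in C}, cyclic generation $\vec{u}=r_1(fh+pf)$, the fact that $P$ sees only reductions mod $p$, and the order-$p$ observation $pP(\vec{u},\vec{v})\in\mathcal{C}\Leftrightarrow pP(\vec{u},\vec{v})\in\mathcal{C}_p=\langle pf\rangle$ is exactly how the paper proves it; and your $(3)\Rightarrow(2)$, decomposing $P$ by total degree, is the paper's one-line remark made explicit. The divergence is the converse direction, and here the paper does none of the work you attempt: it simply invokes \cite[Theorem 2]{LXX}, which states that when $\Phi(\mathcal{C})$ is linear the sets $\Phi(pI^{\ell})$, $\ell\in L$, are generators of $\Phi(\mathcal{C})$; hence $pI^{\ell}\subseteq\mathcal{C}$, and since these vectors have order $p$ they lie in $\mathcal{C}_p=\langle pf\rangle$. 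Your replacement is a self-contained interpolation-and-polarization argument, which would be a genuinely different (and citation-free) route --- but as submitted it is a plan, not a proof: your entire third paragraph consists of steps you say you ``would'' do, and the two difficulties you defer are precisely the ones that require an argument.

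Concretely: (i) scaling $r_1$ by $\lambda\in\mathbb{Z}_p^{\ast}$ only distinguishes exponents modulo $p-1$, so a priori monomials such as $x^{p-1}$ versus $y^{p-1}$ (in degree $p-1$), or $x^p$ versus $xy^{p-1}$ (in degree $p$), cannot be separated; (ii) polarization divides by $a!\,(\ell-a)!$, which is illegitimate if a monomial with $a=p$ must be used. Both obstacles can be removed, but you never do so. The fix is to pass to the reduced representative of $P$ (exponents at most $p-1$ in each variable, using $w^p=w$ on $\mathbb{Z}_p$) and to note that $P$ contains no pure power of $x$ or of $y$: as a function, $P(x,0)=P(0,y)=0$, and a one-variable polynomial of degree at most $p-1$ vanishing on all of $\mathbb{Z}_p$ is zero. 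Then every monomial $x^ay^{\ell-a}$ of $P^{(\ell)}$ has $1\leqslant a\leqslant p-1$ and $1\leqslant \ell-a\leqslant p-1$; such exponents are pairwise distinct mod $p-1$ (as are the total degrees in $L$, where your claim is correct), so the Vandermonde separation isolates each monomial, and the factorial divisions in the polarization are invertible mod $p$. You also still need the input that every $\ell\in L$ actually occurs as the degree of a nonzero term --- that is \cite[Lemma 4]{LXX}, so the argument is not entirely free of the external reference. With these additions your $(2)\Rightarrow(3)$ closes; without them, it is a gap.
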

\begin{proof}
	First, we prove $(1)$ and $(2)$ are equivalent.
    It's easy to check that $pm_1 m_2 = p\overline{m}_1\overline{m}_2$ if $m_1, m_2\in \mathbb{Z}_{p^2}$, where $\overline{m}_1 \equiv m_1 \pmod{p}$ and $\overline{m}_2 \equiv m_2 \pmod{p}$.
    It's known that  $\Phi(\mathcal{C})$ is linear if and only if for any $\vec{u}, \vec{v}\in \mathcal{C}$, $pP(\overline{\vec{u}}, \overline{\vec{v}}) \in \mathcal{C}$,
    where $\overline{\vec{u}} \equiv \vec{u} \pmod{p}$ and $\overline{\vec{v}} \equiv \vec{v} \pmod{p}$.
    Since $\mathcal{C}$ is a cyclic code, then $\vec{u} = r_1 G$, $\vec{v} = r_2 G$ for some polynomials $r_1$, $r_2 \in \mathbb{Z}_{p^2}[x]$, where $G = fh + pf$.
    Hence, $pP(\overline{r_1 G}, \overline{r_2 G}) \in \mathcal{C}_p$, i.e., $pP(\overline{r_1 fh}, \overline{r_2 fh}) \in \langle pf \rangle$.

    As for $(1)$ and $(3)$, note that $I^{\ell}$ is the set of all the componentwise products of $\ell$ codewords.
    If $(3)$ holds, then $(2)$ holds, hence $\Phi(\mathcal{C})$ is linear.
    On the other hand, if $(1)$ holds, then by \cite[Theorem 2]{LXX}, $\Phi(pI^{\ell})$, $\ell \in L$ are generators of $\Phi(\mathcal{C})$.
    Equivalently, $pI^{\ell} \subseteq \langle pf \rangle$.
%
\end{proof}

Let $c$ be a divisor of $x^n-1 \in \mathbb{F}_p[x]$, then $\left( c\otimes c \otimes \cdots \otimes c \right)_j$ is defined to be the polynomial whose roots are the products $\xi^{i_1} \xi^{i_2} \cdots \xi^{i_j}$, where $j$ is a positive integer and $\xi^{i_1}, \xi^{i_2}, \cdots, \xi^{i_j}$ are roots of $c$.
For convenience, denote $\left( c\otimes c \otimes \cdots \otimes c \right)_j$ by $c_j$, which is called the $j_{th}$ \textbf{circle product} of $c$.
For example, $c_2 = (c \otimes c)_2$ and $c_3 = (c \otimes c \otimes c)_3$.
In general, $c_j \mid x^n - 1$ since $c \mid x^n-1$.
Equivalently, the roots of $c_j$ form a subset (not necessarily a subgroup) of the group of the $n$-th roots of unity.
In addition, $0 \leqslant \deg(c_i) \leqslant \deg(c_{i+1}) \leqslant n$, $i \geqslant 1$.

Denote the set of roots of $x^n-1$ by $U_n$, then the set of roots of $c_j$ is a subset of $U_n$.

\begin{example}
	It's known that the roots of $c = x^2 + 1 = (x-\xi)(x-\xi^{-1}) \in \mathbb{Z}_3[x]$ are in the extension field $\mathbb{F}_9$.
	Let $S = \{1, -1\}$, then
	\[ S+S = \{-2, 0, 2\}, \quad S+S+S = \{-3, -1, 1, 3 \}. \]
	That's to say, the roots of $c_2$ and $c_3$ are
	\[ S_1 = \{ \xi^{-2}, \xi^0, \xi^2 \}=\{1,-1\} \subseteq U_4, \quad \textnormal{and} \quad S_2 = \{\xi^{-3}, \xi^{-1}, \xi, \xi^{3}\}=\{\xi, \xi^{-1}\} \subseteq U_4, \]
	respectively.
	Hence, $c_2 = x^2-1$ and $c_3 = x^2+1$.
\end{example}

\begin{lemma} \label{lemma: J_2+J_m in <e_m>}
    Let $c$ and $d$ be two polynomials in $\mathbb{Z}_p[x]$ such that $x^n - 1 = c d$, where $\gcd(n, p) = 1$.
    Denote the $j_{th}$ circle product of $c$ by $c_j$ and let $e_j$ be the polynomial such that $c_j e_j = x^n-1$ for $j=2, \cdots, p$.
    Then $J_i \subseteq \langle e_i \rangle_p^n$,
    where $J = J_1 = \langle d \rangle_p^n$ and $J_{i+1} = J_i \ast J$ for $i=1,2,\cdots,p-1$.
\end{lemma}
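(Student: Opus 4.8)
The plan is to pass to the transform (spectral) domain, where the componentwise product $\ast$ becomes a convolution and the circle product $\otimes$ becomes the sumset of root-exponents; the inclusion $J_i \subseteq \langle e_i \rangle_p^n$ then follows by matching the defining sets of roots. The separability of $x^n-1$, guaranteed by $\gcd(n,p)=1$, is what makes this translation legitimate.

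First I would fix a primitive $n$-th root of unity $\xi$ in the splitting field $\mathbb{F}_{p^m}$ of $x^n-1$ and identify each element of $U_n$ with its exponent, so that every divisor of $x^n-1$ is encoded by the subset of $\mathbb{Z}_n$ giving the exponents of its roots. Let $A \subseteq \mathbb{Z}_n$ be the exponent set of the roots of $c$, so that $d = (x^n-1)/c$ has root-exponent set $\mathbb{Z}_n \setminus A$. Since $J = \langle d \rangle_p^n$ is the ideal of multiples of $d$, a word $v$ lies in $J$ exactly when $v(\xi^m)=0$ for all $m \notin A$, that is, when its transform $\hat v_j = v(\xi^j)$ is supported on $A$. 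By the definition of the circle product, the root-exponent set of $c_i$ is precisely the $i$-fold sumset $A + \cdots + A \pmod{n}$; and since $e_i = (x^n-1)/c_i$, the ideal $\langle e_i \rangle_p^n$ consists exactly of the words whose transform is supported on that same sumset.

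Next I would record the convolution identity for the componentwise product: if $w = u \ast v$, then $\hat w_j = \tfrac{1}{n}\sum_{k+l \equiv j} \hat u_k \hat v_l$, obtained by inverting the transform (here $n$ is invertible in $\mathbb{F}_{p^m}$) and summing the resulting geometric series over $U_n$. The immediate consequence is that if $\hat u$ is supported on a set $A$ and $\hat v$ on a set $A'$, then $\hat w$ is supported on the sumset $A + A' \pmod{n}$. Starting from $J_1 = J$ with support $A$ and using $J_{i+1} = J_i \ast J$, an easy induction shows that every word of $J_i$ has transform supported on the $i$-fold sumset of $A$, which is exactly the root-exponent set of $c_i$. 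Hence every word of $J_i$ vanishes at all roots of $e_i$, so $e_i$ divides it and $J_i \subseteq \langle e_i \rangle_p^n$.

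The main obstacle is the bookkeeping in the transform step: one must keep straight that the transform is computed over the extension field $\mathbb{F}_{p^m}$ while the words and the componentwise product themselves live over $\mathbb{Z}_p$, and that the support bound is phrased as a sumset modulo $n$ so that it lines up precisely with the exponent description of the circle product $c_i$. Once the convolution identity and the defining-set descriptions of $\langle d \rangle_p^n$ and $\langle e_i \rangle_p^n$ are in place, the remaining induction is routine.
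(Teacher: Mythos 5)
Your proposal is correct and takes essentially the same route as the paper: your ``transform'' $\hat v_j = v(\xi^j)$ is exactly the Mattson--Solomon polynomial the paper works with, and your convolution identity is precisely the instance of the cited MacWilliams--Sloane theorem (Ch.~8, \S 6, Theorem 22) on componentwise products, which you re-derive and then apply by induction instead of invoking it for $i$ factors at once. In both arguments the conclusion follows from the same support comparison: the transform of any element of $J_i$ is supported on the $i$-fold sumset of the root-exponent set of $c$, i.e.\ on the roots of $c_i$, hence the element vanishes at every root of $e_i$ and lies in $\langle e_i \rangle_p^n$.
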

\begin{proof}
    Let $s_i(x) = a_1^i(x)d(x) \ast a_2^i(x)d(x) \ast \cdots \ast a_i^i(x)d(x) \in J_i$
    for $i=1,2,\cdots,p$ and $S_i(x)$ be the Mattson-Solomon polynomials of $s_i$.
    Then, by \cite[Ch.8.\S 6. Theorem 22]{book_of_MacWilliams_in_1977}, we have
    \[ S_i(x) = \sum_{k=0}^{n-1} s_i(z^k) x^k = \frac{1}{n^{i-1}} A_1^i(x) A_2^i(x) \cdots A_i^i(x), \]
    where $z$ is a primitive $n$-th root of unity in some extension field of $\mathbb{Z}_p$,   $A_j^i(x)$ is the Mattson-Solomon polynomials of $a_j^i(x)d(x)$ with $1\leqslant j\leqslant i$ and
    \[ A_j^i(x) = \sum_{k=0}^{n-1} a_j^i(z^k)d(z^k) x^k. \]
    Since $x^n-1 = d(x)c(x)$, then $d(z^k) \neq 0$ if and only if $c(z^k) = 0$.
    Thus,
    \[
        \begin{aligned}
            S_i(x) & = \frac{1}{n^{i-1}} \left( \sum_{k_1=0}^{n-1} a_1^i(z^{k_1})d(z^{k_1}) x^{k_1} \right) \cdots
                        \left( \sum_{k_i=0}^{n-1} a_i^i(z^{k_i})d(z^{k_i}) x^{k_i} \right) \\
            & = \frac{1}{n^{i-1}} \left( \sum_{c(z^{k_1})=0} a_1^i(z^{k_1})d(z^{k_1}) x^{k_1} \right)
            \cdots \left( \sum_{c(z^{k_i})=0} a_i^i(z^{k_i})d(z^{k_i}) x^{k_i} \right) \\
            & = \frac{1}{n^{i-1}} \left( \sum_{c_i(z^k)=0} \sum_{k_1+ \cdots + k_i = k} a_1^i(z^{k_1})d(z^{k_1}) \cdots a_i^i(z^{k_i})d(z^{k_i}) x^k \right).
        \end{aligned}
    \]
    It follows that if $s_i(z^k) \neq 0$, then $c_i(z^k) = 0$ for some $0\leqslant k\leqslant n-1$.
    Equivalently, if $e_i(z^k) = 0$, then $s_i(z^k) = 0$.
    This means $e_i(x)$ divides $s_i(x)$ and $s_i(x)\in \langle e_i(x) \rangle_p^n$.
    Thus, $J_i \subseteq \langle e_i \rangle_p^n$.
\end{proof}

It's clear that two polynomials over a field are coprime (over some extension of $\mathbb{Z}_p$) if and only if their roots are totally distinct.
Using this simple property, we give the following lemma.

\begin{lemma} \label{lemma: lemma of Zp2 linear}
	Keep the notations in \Cref{prop: I_2+I_m in <f>} and \Cref{lemma: J_2+J_m in <e_m>}.
	If $\overline{f}$ is coprime with $\overline{g_i}$, then $I_i \subseteq \langle \overline{f} \rangle_p^n$.
	Conversely, if there exists a polynomial $s_i \in I_i \subseteq \langle \overline{f} \rangle_p^n$ such that $s_i(z^j) \neq 0$ for all
	$j \in K_i = \{ k | \overline{g_i}(z^k) = 0, 0\leqslant k\leqslant n-1 \}$, then $\overline{f}$ is coprime with $\overline{g_i}$,
	where $z$ is a primitive $n$-th root of unity in some extension field of $\mathbb{Z}_p$, and $\overline{g_i}$ is the $i_{th}$ circle product of $\overline{g}$, $i \geqslant 1$.
\end{lemma}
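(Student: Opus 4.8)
The plan is to translate both implications into statements about root sets over the splitting field of $x^n-1$ and then read off everything from \Cref{lemma: J_2+J_m in <e_m>}. First I would make the identification that places us in that lemma's setting: reducing $fhg=x^n-1$ modulo $p$ gives $\overline{fh}\cdot\overline{g}=x^n-1$ in $\mathbb{Z}_p[x]$, so taking $d=\overline{fh}$ and $c=\overline{g}$ we have that the $i$th circle product $c_i$ is exactly $\overline{g_i}$, and $J=\langle\overline{fh}\rangle_p^n=I_1$, whence $J_i=I_i$. Then \Cref{lemma: J_2+J_m in <e_m>} yields $I_i\subseteq\langle e_i\rangle_p^n$, where $e_i$ is the complementary factor defined by $\overline{g_i}\,e_i=x^n-1$. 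This inclusion is the only structural input the proof needs.

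Next I would set up the root-set dictionary justified by the coprimality remark preceding the lemma. Since $\gcd(n,p)=1$, the polynomial $x^n-1$ is separable with root set $U_n=\{z^0,\dots,z^{n-1}\}$, so for any divisor $u$ of $x^n-1$ a codeword $s$ lies in $\langle u\rangle_p^n$ if and only if $u\mid s$, if and only if every root of $u$ is a root of $s$; moreover two such divisors are coprime exactly when their root sets are disjoint. In particular the root set of $e_i$ is $U_n\setminus\{z^k:k\in K_i\}$, the complement in $U_n$ of the root set of $\overline{g_i}$.

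With this dictionary both directions are short. For the forward direction, assuming $\overline{f}$ and $\overline{g_i}$ coprime means the roots of $\overline{f}$ avoid those of $\overline{g_i}$ and hence all lie among the roots of $e_i$; since every $s\in I_i\subseteq\langle e_i\rangle_p^n$ vanishes on all roots of $e_i$, it vanishes on all roots of $\overline{f}$, so $\overline{f}\mid s$ and $I_i\subseteq\langle\overline{f}\rangle_p^n$. For the converse, the hypothesis furnishes $s_i\in I_i\subseteq\langle\overline{f}\rangle_p^n$ with $s_i(z^j)\neq0$ for all $j\in K_i$; from $\overline{f}\mid s_i$ every root of $\overline{f}$ is a root of $s_i$, yet $s_i$ has no root among $\{z^j:j\in K_i\}$, the root set of $\overline{g_i}$, so the root sets of $\overline{f}$ and $\overline{g_i}$ are disjoint, i.e. they are coprime.

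I do not expect a genuine obstacle here: the entire substance is carried by \Cref{lemma: J_2+J_m in <e_m>}. The only point requiring care is the bookkeeping of the correspondence ($d=\overline{fh}$, $c=\overline{g}$, $c_i=\overline{g_i}$, and $e_i$ as the complementary factor) together with the consistent passage between ``membership in a principal ideal'' and ``coprimality'' on one side and containment/disjointness of root sets on $U_n$ on the other, a translation that is legitimate precisely because $\gcd(n,p)=1$ makes $x^n-1$ separable.
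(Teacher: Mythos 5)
Your proposal is correct and follows essentially the same route as the paper: both directions rest on \Cref{lemma: J_2+J_m in <e_m>} via the identification $d=\overline{fh}$, $c=\overline{g}$, and your converse argument is exactly the paper's, which merely dresses the evaluation of $s_i=\overline{\lambda_i}\,\overline{f}$ at the points $z^k$, $k\in K_i$, in Mattson--Solomon language before drawing the same disjoint-root-set conclusion. The only cosmetic difference is that the paper first disposes of the degenerate cases $\overline{f}=1$ or $\overline{g}=1$ separately, which your root-set dictionary handles vacuously.
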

\begin{proof}
	By \Cref{prop: I_2+I_m in <f>} and \cite[Theorem 4.13]{LS_Gray_map}, the case $\overline{f} = 1$ or $\overline{g} = 1$ is trivial.
	From now on, assume that $\overline{f} \neq 1$ and $\overline{g_i} \neq 1$.
	Let $e_i$ be the polynomial such that $g_i e_i = \overline{g_i} \overline{e_i} = x^n-1$.
	
	The first statement follows from \Cref{lemma: J_2+J_m in <e_m>}.
	So it's sufficient to consider the last statement.
	
	Obviously, $K_i \neq \emptyset$ since $1 \neq \overline{g} \mid x^n-1$.
	Let $s_i(x) = \overline{\lambda_i}(x) \overline{f}(x) \in I_i \subseteq \langle \overline{f} \rangle_p^n$ be the polynomial that satisfies the condition.
	Using the Mattson-Solomon transform, we obtain
	\[ S_i(x) = \sum_{j=0}^{n-1} s_i(z^j) x^j = \sum_{j=0}^{n-1} \overline{\lambda_i}(z^j) \overline{f}(z^j) x^j. \]
	Then for every $k \in K_i$, $\overline{\lambda_i}(z^k) \overline{f}(z^k) \neq 0$.
	It follows that $\overline{f}(z^k) \neq 0$.
	We deduce that $\overline{g_i}(z^k) = 0$ implies that $\overline{f}(z^k) \neq 0$.
	Equivalently, if $\overline{f}(z^k) = 0$, then $\overline{g_i}(z^k)\neq 0$. Namely, $\overline{e_i}(z^k) = 0$.
	Therefore, $\overline{f} $ divides $\overline{e_i} $ and $\overline{f}$ is coprime with $\overline{g}_i$.
\end{proof}

By the above lemmas, we give the necessary and sufficient condition on the linearity of the Gray image of some cyclic code over $\mathbb{Z}_{p^2}$.

\begin{theorem} \label{thm: linearity of images of Zp2 cyclic codes}
	Let $\mathcal{\mathcal{C}} = \langle fh  + pf \rangle$ be a linear cyclic code over $\mathbb{Z}_{p^2}$ of length $n$ with $fhg = x^n-1$ and $\gcd(n, p) = 1$.
	Then $\Phi(\mathcal{\mathcal{C}})$ is linear if and only if $\overline{f}$ is coprime with $\overline{g_\ell}$ for all $\ell \in L$,
	where $\overline{g_\ell}$ is the $\ell_{th}$ circle product of $\overline{g}$.
\end{theorem}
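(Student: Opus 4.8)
The plan is to read the theorem off the two preceding results: \Cref{prop: I_2+I_m in <f>} already reduces linearity of $\Phi(\mathcal{C})$ to the inclusions $I_\ell \subseteq \langle \overline{f}\rangle_p^n$ for every $\ell \in L$, while \Cref{lemma: lemma of Zp2 linear} relates each such inclusion to the coprimality of $\overline{f}$ and $\overline{g_\ell}$. First I would dispose of the degenerate cases $\overline{f}=1$ and $\overline{g}=1$ exactly as in the proof of \Cref{lemma: lemma of Zp2 linear}, and record that $\overline{f}$ and $\overline{g}=\overline{g_1}$ are automatically coprime because $\overline{fh}\,\overline{g}=x^n-1$ is squarefree (as $\gcd(n,p)=1$); this is why $L$ may start at $\ell=3$. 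For the implication ``coprime $\Rightarrow$ linear'' I would simply feed the hypothesis into the first half of \Cref{lemma: lemma of Zp2 linear}: coprimality of $\overline{f}$ and $\overline{g_\ell}$ yields $I_\ell \subseteq \langle \overline{f}\rangle_p^n$ for each $\ell\in L$, and then the equivalence $(1)\Leftrightarrow(3)$ of \Cref{prop: I_2+I_m in <f>} gives linearity of $\Phi(\mathcal{C})$.

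The converse is where the work lies. Assuming $\Phi(\mathcal{C})$ is linear, \Cref{prop: I_2+I_m in <f>} again gives $I_\ell \subseteq \langle \overline{f}\rangle_p^n$; but to invoke the second half of \Cref{lemma: lemma of Zp2 linear} I must produce a witness $s_\ell \in I_\ell$ with $s_\ell(z^j)\neq 0$ at every root $z^j$ of $\overline{g_\ell}$, i.e.\ for all $j\in K_\ell$. My plan is to show that $I_\ell$ is itself a cyclic code whose set of nonzeros is exactly $K_\ell$, so that its generating idempotent is the desired $s_\ell$. Cyclicity is immediate from $\sigma(u\ast v)=\sigma u\ast \sigma v$ (the componentwise product commutes with the cyclic shift $\sigma$), so $I_\ell$ is a shift-invariant linear span and hence an ideal. \Cref{lemma: J_2+J_m in <e_m>} already shows its nonzero-set is contained in $K_\ell$; the reverse inclusion, namely that the $\ell$-fold componentwise power genuinely fills the sumset $\underbrace{R+\dots+R}_{\ell}$ of the Mattson--Solomon support $R$ of $\langle \overline{fh}\rangle_p^n$ (equivalently the root set of $\overline{g}$), is the main obstacle, because in characteristic $p$ one must rule out accidental cancellation.

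To settle that inclusion I would work in the Mattson--Solomon domain, exactly as in \Cref{lemma: J_2+J_m in <e_m>}. Let $\theta$ be the idempotent generator of $\langle \overline{fh}\rangle_p^n$, so that $\widehat{\theta}_k=1$ for $k\in R$ and $0$ otherwise, and note that each shift $\sigma^{a}\theta$ lies in $\langle \overline{fh}\rangle_p^n$ with transform $z^{ak}$ on $R$. For a fixed $k=r_1+\dots+r_\ell\in K_\ell$, the $k$-th coordinate of the transform of $\sigma^{a_1}\theta\ast\dots\ast\sigma^{a_\ell}\theta$ equals, up to the invertible factor $n^{-(\ell-1)}$, the sum $\sum z^{a_1 r_1+\dots+a_\ell r_\ell}$ over all $(r_1,\dots,r_\ell)\in R^\ell$ with $\sum_m r_m\equiv k$. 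Viewed as a function of $(a_1,\dots,a_\ell)$, this is a nonempty combination of pairwise distinct characters of $(\mathbb{Z}/n)^\ell$, hence nonzero for some choice of the $a_m$ by Artin's independence of characters; the resulting componentwise product is a $\mathbb{Z}_p$-codeword in $I_\ell$ not vanishing at $z^k$.

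Thus every $k\in K_\ell$ lies in the nonzero-set of $I_\ell$, which forces equality with $K_\ell$; the idempotent of the cyclic code $I_\ell$ then has transform $\mathbf{1}_{K_\ell}$ and serves as the witness $s_\ell$. Feeding $s_\ell$ and the inclusion $I_\ell\subseteq\langle\overline{f}\rangle_p^n$ into the second half of \Cref{lemma: lemma of Zp2 linear} forces $\overline{f}$ and $\overline{g_\ell}$ to be coprime. Running this argument for every $\ell\in L$ completes the converse, and with it the proof. The one delicate point I expect to guard carefully is the equality $\mathrm{supp}(I_\ell)=K_\ell$: the containment from \Cref{lemma: J_2+J_m in <e_m>} is free, but the filling of the sumset really does require the character-independence step, since the naive choice $\theta^{\ast \ell}$ can vanish at a point of $K_\ell$ whenever the relevant representation count is divisible by $p$.
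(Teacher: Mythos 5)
Your proposal is correct, and while it shares the paper's skeleton --- sufficiency via the first half of \Cref{lemma: lemma of Zp2 linear} together with \Cref{prop: I_2+I_m in <f>}, and necessity reduced to exhibiting a witness in $I_\ell \subseteq \langle\overline{f}\rangle_p^n$ that is nonzero at $z^k$ for every $k \in K_\ell$ --- the way you produce that witness is genuinely different from the paper's. The paper builds it inductively as $s_\ell = d \ast x^{m_2}d \ast \cdots \ast x^{m_\ell}d$ with $d=\overline{fh}$, choosing one shift $m_i$ at each stage by noting that the Mattson--Solomon convolution $\Gamma_k^i(x)$ is a nonzero polynomial of degree at most $n-1$ and so has a non-root among the $n$-th roots of unity. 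You instead take componentwise products of shifts of the idempotent $\theta$ of $\langle\overline{fh}\rangle_p^n$ and, for each fixed $k\in K_\ell$, use linear independence of distinct characters of $(\mathbb{Z}/n\mathbb{Z})^\ell$ to find exponents $(a_1,\dots,a_\ell)$ with $\bigl((x^{a_1}\theta)\ast\cdots\ast(x^{a_\ell}\theta)\bigr)(z^k)\neq 0$; you then pass to the linear span of $I_\ell$ (a cyclic code, by shift-invariance of $I_\ell$) and take its idempotent as a single uniform witness, with \Cref{lemma: J_2+J_m in <e_m>} supplying the other inclusion for the support. What your route buys is rigor precisely where the paper is shaky: the paper's induction requires one $m_i$ that works simultaneously for \emph{all} $k\in K_i$, but its degree argument only produces an $m_i$ for each individual $k$ (the union of the root sets of the $\Gamma_k^i$ over $k\in K_i$ could a priori cover all $n$-th roots of unity), and its asserted implication that $k\in K_i$, $j\in K_1$ imply $k-j\in K_{i-1}$ is false in general (it holds only for $j$ occurring in some representation of $k$ as a sum of $i$ elements of $K_1$, which is all the argument actually needs to get $J_i(k)\neq\emptyset$). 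Your per-$k$ character argument plus the idempotent of the span sidesteps both issues; the price is invoking character independence, which the paper's more elementary (if incomplete) polynomial-counting argument avoids. Your closing observation that the naive witness $\theta^{\ast\ell}$ can die when a representation count vanishes modulo $p$ is exactly the right reason why some such device is unavoidable.

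One technical point to tidy up: your final witness is the idempotent of the \emph{span} of $I_\ell$, so it need not lie in $I_\ell$ itself, whereas \Cref{lemma: lemma of Zp2 linear} is stated for $s_\ell \in I_\ell$. This is harmless, since the proof of that lemma uses only that $s_\ell\in\langle\overline{f}\rangle_p^n$ and that $s_\ell(z^k)\neq 0$ for all $k\in K_\ell$, and the span of $I_\ell$ is still contained in the linear code $\langle\overline{f}\rangle_p^n$. In fact you could drop the idempotent step entirely: a separate witness $s^{(k)}\in I_\ell$ for each $k\in K_\ell$ already forces $\overline{f}(z^k)\neq 0$ for every such $k$, which is all that the coprimality of $\overline{f}$ and $\overline{g_\ell}$ requires.
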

\begin{proof}
	The sufficiency follows from \Cref{prop: I_2+I_m in <f>} and \Cref{lemma: lemma of Zp2 linear}.
	
	Assume that $\Phi(\mathcal{C})$ is linear, then $I_{\ell} \subseteq \langle \overline{f} \rangle_p^n$ for every $\ell \in L$.
	Let $s_i(x) = s_{i-1}(x) \ast x^{m_i}d(x) \in I_i$ and $S_i(x)$ be the Mattson-Solomon polynomials of $s_i$,
	where $s_{i-1}(x) \in I_{i-1}$ and $s_1(x) = d(x)$, $d = \overline{fh}$, and $0 \leqslant m_i \leqslant n-1$.
	Then for $i \geqslant 2$, by \cite[Ch.8.\S 6. Theorem 22]{book_of_MacWilliams_in_1977}, we have
	\[ S_i(x) = \sum_{k \in K_i} s_i(z^k) x^k = \frac{1}{n} \left( \sum_{t \in K_{i-1}} s_{i-1}(z^t) x^t \right)\left( \sum_{j\in K_1} z^{j m_i}d(z^j) x^j \right),   \]
	where $z$ is a primitive $n$-th root of unity in some extension field of $\mathbb{Z}_p$.
	
	We claim that there exists $0 \leqslant m_i \leqslant n-1$ such that $s_i(z^k) \neq 0$ for every $k \in K_i$.

	Use induction on $i$.
	The case $i = 1$ holds trivially, since $s_1(z^k) = z^0 d(z^k) \neq 0$ for every $k \in K_1$.
	Then assume that $s_{i-1}(z^t) \neq 0$ for every $t \in K_{i-1}$.
	For every $k \in K_i$, define
	\[ \Gamma_{k}^{i}(x) = \frac{1}{n} \sum_{j\in J_i(k)} s_{i-1}(z^{k-j}) d(z^{j}) x^{j}, \]
	where \[ J_i(k) = \{ j | d(z^j) \neq 0, s_{i-1}(z^{k-j}) \neq 0 \}. \]
	If $k \in K_i$ and $j \in K_1$, then $k-j \in K_{i-1}$, which implies that $s_{i-1}(z^{k-j}) \neq 0$ and $J_i(k) \neq \emptyset$.
	Since $s_{i-1}(z^{k-j}) d(z^{j}) \neq 0$ for every $j \in K_1$, then $\Gamma_k^{i}$ is a nonzero polynomial whose degree is at most $n-1$.
	Then there exists an integer $0 \leqslant m_{i} \leqslant n-1$ such that $\Gamma_{k}^{i}(z^{m_{i}}) = s_{i}(z^k) \neq 0$.
	Thus, the claim holds.
	
	From the claim, $s_{\ell} \in I_{\ell}$ is the polynomial such that for every $j \in K_{\ell}$, $s_{\ell}(z^j) \neq 0$.
	By \Cref{lemma: lemma of Zp2 linear}, $\overline{f}$ is coprime with $\overline{g_\ell}$,
	where $\ell \in L$.
\end{proof}

\begin{remark}\label{remark: linearity Zp2}
	The Gray images of the codes are called \textit{trivial} in which $f=1$, or $g=1$, or $g=x^m-1$ for some $m \mid n$,
	since the Gray images are always linear.
	Assume that $f \neq 1$ and let $\Gamma$ be the set of roots of $\overline{g}$.
	It's known that $\overline{g_i}$ is the polynomial whose roots are just in the set $\Gamma^i$ for some integer $i$, where
	\[ \Gamma^i = \Gamma \cdots \Gamma = \{ \xi_1 \dots \xi_i | \xi_j \in \Gamma, 1 \leqslant j \leqslant i \} \subseteq U_n. \]
	It follows that $ab \in \Gamma^{i+j}$ if $a \in \Gamma^i$ and $b \in \Gamma_j$.
	Therefore, the nontrivial linear Gray image exists if $\overline{f} \mid (1+x^{e}+x^{2e} + \dots + x^{n-e})$, where $e$ is the order of $\overline{g}$ and $e < n$.
	Furthermore, $\overline{f} \mid (1+x^{e}+x^{2e} + \dots + x^{n-e})$ if and only if $\overline{f}$ is coprime with $\overline{g_i}$ for every positive integer $i$.
\end{remark}

In \Cref{sec: examples}, for some special $n$, the linear Gray images of the cyclic codes obtained from $x^n-1$ are all trivial.
Thus, it's a difficult but interesting topic to find the nontrivial linear Gray image as optimal codes.

\subsection{\texorpdfstring{$p$}{}-ary Images of \texorpdfstring{$\mathbb{Z}_p\mathbb{Z}_{p^2}$}{}-Additive Cyclic Codes}\label{subsec:Gray image of ZpZp2 cyclic codes}
In this subsection, we will give a necessary and sufficient condition of the $p$-ary image of a $\mathbb{Z}_p\mathbb{Z}_{p^2}$-additive cyclic code to be linear.

\begin{lemma} \label{lemma: linearity of Gray images with separable C_p}
    About $\Phi(\mathcal{C}_Y)$, we have the following two assertions:
    \begin{enumerate}
        \item[(1)] Let $\mathcal{C}$ be a $\mathbb{Z}_p\mathbb{Z}_{p^2}$-additive code such that $\Phi(\mathcal{C})$ is linear, then $\Phi(\mathcal{C}_Y)$ is also linear.
        \item[(2)] Let $\mathcal{C}$ be a $\mathbb{Z}_p\mathbb{Z}_{p^2}$-additive code of length $n$ with $\gcd(n, p) = 1$.
        If $\mathcal{C}_p$ is separable, then $\Phi(\mathcal{C})$ is linear if and only if $\Phi(\mathcal{C}_Y)$ is linear.
    \end{enumerate}
\end{lemma}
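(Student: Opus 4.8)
The plan is to funnel both assertions through the characterization in \Cref{lemma: Phi(C) is linear iff pP(u,v) in C}: for a $\mathbb{Z}_p\mathbb{Z}_{p^2}$-additive code $\mathcal{D}$ (including the pure-$\mathbb{Z}_{p^2}$ case $\alpha=0$, which covers $\mathcal{C}_Y$), the image $\Phi(\mathcal{D})$ is linear if and only if $pP(\vec{w}_1,\vec{w}_2)\in\mathcal{D}$ for all $\vec{w}_1,\vec{w}_2\in\mathcal{D}$. The computational fact I would isolate at the outset is how the carry behaves across the two coordinate blocks. Writing $\vec{u}=(\vec{x}_1,\vec{y}_1)$ and $\vec{v}=(\vec{x}_2,\vec{y}_2)$ with $\vec{x}_i\in\mathbb{Z}_p^{\alpha}$, $\vec{y}_i\in\mathbb{Z}_{p^2}^{\beta}$, the carry $P(\cdot,\cdot)$ of \eqref{eq: P(u,v)} is computed componentwise and takes values in $\{0,1\}$; hence on the $X$-coordinates the term $pP$ lives in $\mathbb{Z}_p$ and is identically $\vec{0}$, while on the $Y$-coordinates it lands in $p\mathbb{Z}_{p^2}$. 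Since $P$ depends only on residues modulo $p$, its $Y$-block equals the carry of the $\mathbb{Z}_{p^2}$-parts alone, giving
\[ pP(\vec{u},\vec{v}) = (\vec{0},\, pP(\vec{y}_1,\vec{y}_2)). \]
This block identity is the engine of both parts.

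For assertion (1), let $\vec{y}_1,\vec{y}_2\in\mathcal{C}_Y$ and choose $\vec{x}_1,\vec{x}_2$ with $(\vec{x}_i,\vec{y}_i)\in\mathcal{C}$. As $\Phi(\mathcal{C})$ is linear, \Cref{lemma: Phi(C) is linear iff pP(u,v) in C} gives $pP\big((\vec{x}_1,\vec{y}_1),(\vec{x}_2,\vec{y}_2)\big)\in\mathcal{C}$, which by the block identity equals $(\vec{0},pP(\vec{y}_1,\vec{y}_2))$. Puncturing to the $Y$-coordinates yields $pP(\vec{y}_1,\vec{y}_2)\in\mathcal{C}_Y$, and since $\vec{y}_1,\vec{y}_2$ were arbitrary, \Cref{lemma: Phi(C) is linear iff pP(u,v) in C} again shows $\Phi(\mathcal{C}_Y)$ is linear.

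For assertion (2), the forward implication is exactly (1) and uses no separability. The content is the converse. Assume $\Phi(\mathcal{C}_Y)$ is linear and take $\vec{u}=(\vec{x}_1,\vec{y}_1),\vec{v}=(\vec{x}_2,\vec{y}_2)\in\mathcal{C}$, so $\vec{y}_1,\vec{y}_2\in\mathcal{C}_Y$ and $pP(\vec{y}_1,\vec{y}_2)\in\mathcal{C}_Y$. By definition of the punctured code there is some $\vec{x}_0$ with $(\vec{x}_0,pP(\vec{y}_1,\vec{y}_2))\in\mathcal{C}$; its $Y$-block lies in $p\mathbb{Z}_{p^2}^{\beta}$, so this codeword has order dividing $p$ and hence belongs to $\mathcal{C}_p$. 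Now separability enters: $\mathcal{C}_p=(\mathcal{C}_p)_X\times(\mathcal{C}_p)_Y$ forces $pP(\vec{y}_1,\vec{y}_2)\in(\mathcal{C}_p)_Y$, whence $(\vec{0},pP(\vec{y}_1,\vec{y}_2))\in\mathcal{C}_p\subseteq\mathcal{C}$. By the block identity this is precisely $pP(\vec{u},\vec{v})$, so $pP(\vec{u},\vec{v})\in\mathcal{C}$ and \Cref{lemma: Phi(C) is linear iff pP(u,v) in C} gives linearity of $\Phi(\mathcal{C})$.

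The step I expect to be the crux — and the only place separability is used — is replacing the unknown companion $\vec{x}_0$ by $\vec{0}$ in the converse of (2): knowing $pP(\vec{y}_1,\vec{y}_2)\in\mathcal{C}_Y$ only produces a codeword whose $X$-part may be nonzero, and one must detach that part to conclude $(\vec{0},pP(\vec{y}_1,\vec{y}_2))\in\mathcal{C}$. Separability of $\mathcal{C}_p$ is exactly the structural input enabling this detachment, because the offending codeword automatically sits in $\mathcal{C}_p$. I note that the hypothesis $\gcd(n,p)=1$ is not actually needed for this argument; it is inherited from the cyclic setting in which the lemma is later applied.
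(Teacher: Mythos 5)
Your proof is correct and follows essentially the same route as the paper: both apply \Cref{lemma: Phi(C) is linear iff pP(u,v) in C}, use the block identity $pP(\vec{u},\vec{v})=(\vec{0},pP(\vec{y}_1,\vec{y}_2))$, observe that the lifted codeword $(\vec{x}_0,pP(\vec{y}_1,\vec{y}_2))$ has order $p$ and hence lies in $\mathcal{C}_p$, and invoke separability of $\mathcal{C}_p$ exactly to replace $\vec{x}_0$ by $\vec{0}$. Your closing remarks (that separability is used only at that detachment step, and that $\gcd(n,p)=1$ is never needed) are accurate observations the paper leaves implicit.
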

\begin{proof}
    As for $(1)$, for any $\vec{u} = (u, u'), \vec{v} = (v, v') \in \mathcal{C}$, $pP(\vec{u}, \vec{v}) = (\vec{0}, pP(u', v')) \in \mathcal{C}_p\subseteq \mathcal{C}$,
    which means $pP(u', v')\in \mathcal{C}_Y$.
    Thus, $\Phi(\mathcal{C}_Y)$ is linear. As for $(2)$, we only need to consider the sufficiency.
    If $\Phi(\mathcal{C}_Y)$ is linear and $u', v'\in \mathcal{C}_Y$, then $pP(u', v')\in \mathcal{C}_Y$, which means there exists $\vec{w} = (w, pP(u', v')) \in \mathcal{C}$.
    Thus, $pP(u', v')\in (\mathcal{C}_p)_Y$.
    Hence, $(\vec{0}, pP(u', v'))\in \mathcal{C}_p \subseteq \mathcal{C}$ since $\mathcal{C}_p$ is separable and $\vec{0} \in (\mathcal{C}_p)_X$.
\end{proof}

\begin{lemma} \label{3 generators of zpzp2 cyclic codes}
    Let $\mathcal{C} = \langle (a, 0), (b, fh + pf)\rangle$ be a $\mathbb{Z}_p\mathbb{Z}_{p^2}$-additive cyclic code with $fhg = x^\beta-1$.
    Then $\mathcal{C}$ can be also generated by $(a, 0)$, $(b\overline{g}, pfg)$ and $(b', fh)$, where $b' = b - \overline{\mu}b\overline{g}$ and $\lambda h + \mu g = 1$.
\end{lemma}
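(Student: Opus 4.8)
The plan is to establish the equality of the two $\mathbb{Z}_{p^2}[x]$-submodules by proving mutual inclusion of the generating sets, working inside $R_{\alpha,\beta}$ with the action $\star$ and writing $G = fh + pf$ for brevity. Two identities will be used over and over: $fhg = x^{\beta}-1 \equiv 0$ in the quotient, and $p^{2}=0$ in $\mathbb{Z}_{p^2}$ (so that $\overline{p\lambda}=0$ and every term carrying a factor $p^{2}$ vanishes). Since $(a,0)$ appears in both lists, it suffices to show that the two proposed elements $(b\overline{g}, pfg)$ and $(b', fh)$ lie in $\langle (a,0),(b,G)\rangle$, and conversely that $(b,G)$ lies in the module generated by the three proposed elements; in fact I expect to prove the sharper statement $\langle (b,G)\rangle = \langle (b\overline{g},pfg),(b',fh)\rangle$, after which adjoining the common generator $(a,0)$ finishes the argument.

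For the first inclusion I would produce both new generators by applying $\star$ to $(b,G)$. Computing $g\star(b,G) = (\overline{g}\,b,\; g(fh+pf))$ and using $fgh = fhg \equiv 0$ collapses the $\mathbb{Z}_{p^2}$-coordinate to $pfg$, giving exactly $(b\overline{g}, pfg)$. For $(b', fh)$ I would compute $\lambda(h-p)\star(b,G)$: its $\mathbb{Z}_{p^2}$-coordinate expands to $\lambda f h^{2}$ (the two $p\lambda fh$ terms cancel and the $p^{2}$ term dies), and reducing modulo $x^{\beta}-1$ via $\lambda h = 1-\mu g$ and $fhg \equiv 0$ turns $\lambda f h^{2} = fh(1-\mu g)$ into $fh$; its $\mathbb{Z}_{p}$-coordinate is $\overline{\lambda h - p\lambda}\,b = \overline{\lambda h}\,b = b(1-\overline{\mu}\,\overline{g}) = b'$, using $\overline{p\lambda}=0$ and $\overline{\lambda h}=\overline{1-\mu g}$. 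Hence both $(b\overline{g},pfg)$ and $(b',fh)$ belong to $\langle (b,G)\rangle \subseteq \langle (a,0),(b,G)\rangle$.

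For the reverse inclusion I would reconstruct $(b,G)$ from the three proposed elements through the coprimality relation $\lambda h + \mu g = 1$. The clean combination is $(1+p\lambda)\star(b',fh) + \mu\star(b\overline{g},pfg)$: the $\mathbb{Z}_{p}$-coordinates add to $b' + \overline{\mu}\,b\overline{g} = b$, while the $\mathbb{Z}_{p^2}$-coordinates add to $fh + pf(\lambda h + \mu g) = fh + pf = G$. Thus the sum is precisely $(b,G)$, so $(b,G)\in\langle (b\overline{g},pfg),(b',fh)\rangle$. Combined with the previous step this yields $\langle (b,G)\rangle = \langle (b\overline{g},pfg),(b',fh)\rangle$, and hence the desired equality of additive cyclic codes.

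I expect the only delicate point to be the simultaneous bookkeeping in the two coordinate rings: remembering to reduce the $\mathbb{Z}_{p^2}$-coordinate modulo $x^{\beta}-1$ (without which $\lambda f h^{2}$ does not literally read as $fh$), and tracking where $p^{2}=0$ annihilates a term in the $\mathbb{Z}_{p^2}$-coordinate versus where the reduction $\overline{(\cdot)}$ kills a $p$-multiple in the $\mathbb{Z}_{p}$-coordinate. None of this is conceptually hard once the three relations $fhg\equiv 0$, $p^{2}=0$, and $\lambda h + \mu g = 1$ are applied systematically, so I anticipate a short and routine verification rather than any genuine obstacle.
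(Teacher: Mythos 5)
Your proof is correct, and it is essentially the approach the paper intends: the paper's "proof" is just a pointer to the binary-case argument (Lemma 10 of the $\mathbb{Z}_2\mathbb{Z}_4$ images paper), which is exactly this mutual-inclusion computation — $g\star(b,G)=(b\overline{g},pfg)$ via $fhg\equiv 0$, the $\lambda(h-p)$ trick (the analogue of $\lambda(h-2)$ in the binary case) to produce $(b',fh)$, and the recombination $fh+pf(\lambda h+\mu g)=G$ for the reverse inclusion. Your write-up actually supplies the details for odd $p$ that the paper leaves to the reader.
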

\begin{proof}
    Similar to the proof of \cite[Lemma 10]{binary_images_of_z2z4_cyclic}.
\end{proof}

\begin{lemma}
    \label{generators of C_p}
    Let $\mathcal{C} = \langle (a, 0), (b, fh + pf)\rangle$ be a $\mathbb{Z}_p\mathbb{Z}_{p^2}$-additive cyclic code with $fhg = x^\beta-1$, then the subcode
    $\mathcal{C}_p$ is generated by $(a , 0), (b  \overline{g }, pf  g )$ and $(0, pf  h )$.
\end{lemma}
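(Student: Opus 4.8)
The plan is to exploit the three–generator description of $\mathcal{C}$ provided by \Cref{3 generators of zpzp2 cyclic codes}, namely $\mathcal{C} = \langle (a,0),\ (b\overline{g}, pfg),\ (b', fh)\rangle$ with $b' = b - \overline{\mu}b\overline{g}$ and $\lambda h + \mu g = 1$, and to isolate the single generator $(b', fh)$ of order $p^2$. Write $D = \langle (a,0),\ (b\overline{g}, pfg),\ (0, pfh)\rangle$ for the claimed code. First I would verify $D \subseteq \mathcal{C}_p$: the generators $(a,0)$ and $(b\overline{g}, pfg)$ are plainly annihilated by $p$ (the former is supported on the $\mathbb{Z}_p$ part, the latter has $Y$-component $pfg \in p\mathbb{Z}_{p^2}[x]$), while $(0, pfh)$ lies in $\mathcal{C}$ because $p\star(b', fh) = (0, pfh)$ (the $X$-component $\overline{p}\,b'$ vanishes over $\mathbb{Z}_p$), and it too has order dividing $p$.

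The substance is the reverse inclusion $\mathcal{C}_p \subseteq D$. Given $c \in \mathcal{C}_p$, I would write $c = d_1\star(a,0) + d_2\star(b\overline{g}, pfg) + d_3\star(b', fh)$ with $d_1 \in \mathbb{Z}_p[x]$ and $d_2, d_3 \in \mathbb{Z}_{p^2}[x]$, and compute $p\star c$. Since $p$ kills the first two terms as well as the $X$-part of the third, the defining condition $p\star c = 0$ reduces to $pd_3 fh \equiv 0 \pmod{x^\beta - 1}$ in $\mathbb{Z}_{p^2}[x]$. Reducing modulo $p$, this reads $\overline{d_3}\,\overline{f}\,\overline{h} \equiv 0 \pmod{x^\beta-1}$; since $\gcd(\beta, p)=1$ makes $x^\beta-1 = \overline{f}\,\overline{h}\,\overline{g}$ squarefree with pairwise coprime factors, this is exactly the divisibility $\overline{g} \mid \overline{d_3}$.

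With that divisibility in hand I would lift a factorization $\overline{d_3} = \overline{g}\,\overline{e}$ to $d_3 = ge + p\rho$ in $\mathbb{Z}_{p^2}[x]$ and expand $d_3\star(b', fh)$. The term $p\rho\star(b', fh) = \rho\star(0, pfh)$ already lies in $D$, and using $gfh = x^\beta - 1 \equiv 0$ the remaining contribution collapses to $(\overline{g}\,\overline{e}\,b', 0)$, an element with vanishing $Y$-component. The key identity is $\overline{g}b' = \overline{\lambda}\,\overline{h}\,\overline{g}\,b$, which follows from $b' = b - \overline{\mu}b\overline{g}$ together with $\overline{\lambda}\,\overline{h} + \overline{\mu}\,\overline{g} = 1$; combined with the computation $(\overline{\lambda}h)\star(b\overline{g}, pfg) = (\overline{\lambda}\,\overline{h}\,\overline{g}\,b, 0) = (\overline{g}b', 0)$, where $hfg = x^\beta-1$ forces the $Y$-component to vanish, this exhibits $(\overline{g}b', 0) \in D$, whence $(\overline{g}\,\overline{e}\,b', 0) = \overline{e}\star(\overline{g}b', 0) \in D$. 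Assembling the pieces shows $c \in D$, so $\mathcal{C}_p = D$.

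I expect the main obstacle to be the third step: the order-$p^2$ generator $(b', fh)$ contributes a nontrivial $X$-component $\overline{d_3}\,b'$ that does not obviously sit inside $D$, and it is precisely the interplay of the Bézout relation $\lambda h + \mu g = 1$ with the product $hfg = x^\beta-1$ that lets this component be rewritten as a $\star$-multiple of $(b\overline{g}, pfg)$ having zero $Y$-part. Careful bookkeeping of which products are taken over $\mathbb{Z}_p$ (the reductions $\overline{\,\cdot\,}$) versus over $\mathbb{Z}_{p^2}$ is where the argument must be handled precisely.
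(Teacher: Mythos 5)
Your proof is correct, but it takes a genuinely different route from the paper's. The paper proves the hard inclusion starting from the original two-generator description: it writes $\vec{v} = \overline{d_1}\star(a,0) + d_2\star(b, fh+pf)$ and extracts the key divisibility on $d_2$ via the division algorithm, $d_2 = qg + d_2'$ with $\deg(d_2') < \deg(g)$, together with a degree argument (since $\deg(pd_2'fh) < \beta$, no reduction modulo $x^\beta-1$ intervenes, so $p\vec{v}=0$ forces $d_2' \equiv 0 \pmod{p}$); because this step leans on $fh \not\equiv 0 \pmod{p}$, the paper must treat the degenerate case $\mathcal{C} = \mathcal{C}_p$ (equivalently $g=1$) separately via \Cref{3 generators of zpzp2 cyclic codes}. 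Its expansion then lands directly on the claimed generators, since $qg\star(b, fh+pf) = q\star(b\overline{g}, pfg)$ and $pk\star(b, fh+pf) = k\star(0, pfh)$. You instead take the three-generator description of \Cref{3 generators of zpzp2 cyclic codes} as the starting point, isolate the coefficient $d_3$ of the unique order-$p^2$ generator $(b', fh)$, and obtain $\overline{g}\mid\overline{d_3}$ by reducing modulo $p$ and cancelling $\overline{f}\,\overline{h}$ in the integral domain $\mathbb{F}_p[x]$ --- no degree bookkeeping and no case split (when $g=1$ your argument runs verbatim, since $\overline{g}=1$ divides everything and $b'$ plays no special role). The price is the residual term $\left(\overline{g}\,\overline{e}\,b',0\right)$ in your expansion, which is not a multiple of a claimed generator and must be placed in the claimed code by the extra B\'ezout computation $\overline{g}b' = \overline{\lambda}\,\overline{h}\,\overline{g}\,b$, realized as the $X$-part of $(\lambda h)\star(b\overline{g}, pfg)$; the paper's choice of generators sidesteps this step entirely. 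Both arguments are sound: yours buys uniformity and makes explicit the role of the B\'ezout relation defining $b'$, while the paper's is more economical in the final expansion. One minor remark: plain cancellation in $\mathbb{F}_p[x]$ already gives $\overline{g}\mid\overline{d_3}$, so the squarefreeness of $x^\beta-1$ that you invoke at that point is not actually needed.
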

\begin{proof}
    According to the definition, $\mathcal{C}_p \subseteq \mathcal{C}$ is the subcode which contains all codewords of order $p$.
    Obviously, the order of any codeword $\vec{u} \in \langle (a , 0), (b  \overline{g }, pf  g ), (0, pf  h ) \rangle$ is $p$.

    Note that $\mathcal{C} = \mathcal{C}_p$ if and only if $\deg(g) = 0$, i.e., $g = 1$ by \Cref{thm: type and generator polynomials}.
    In this case. $fh = x^\beta - 1 = 0 = pfh$ when regarded as vectors.
    Let $\lambda = 0$ and $\mu = 1$, then $b' = b-b = 0$ and the result follows from \Cref{3 generators of zpzp2 cyclic codes}.

    From now on, assume that $\mathcal{C} \neq \mathcal{C}_p$.
    Let $\vec{0} \neq \vec{v}\in \mathcal{C}_p$, then there exist polynomials $d_1, d_2\in \mathbb{Z}_{p^2}[x]$ such that
    \[ \vec{v} = \overline{d_1} \star (a, 0) + d_2 \star (b, f h + pf), \quad p\vec{v} = (0, pd_2 fh) = 0. \]
    Using the division algorithm, $d_2 = qg + d'_2$ for some $q$ and $d'_2$, where $\deg(d'_2) < \deg(g)$.
    Then $pd_2fh \equiv pd'_2fh \pmod{(x^\beta-1)}$ and $pd_2fh \equiv pd'_2fh \pmod{p^2}$.
    Since $pd'_2 fh$ is a polynomial with degree smaller than $\beta$, then $pd'_2 fh \equiv 0 \pmod{p^2}$.
    If $fh \equiv 0 \pmod{p}$, then $p(b, fh + pf) = 0$, which implies that $\mathcal{C} = \mathcal{C}_p$, contradiction.
    Thus, we have $d'_2 \equiv 0 \pmod{p}$ and $d'_2 = pk $ for some nonzero polynomial $k$.
    Hence, $d_2 = qg + pk$ and
    \begin{align*}
    	\vec{v} & = \overline{d_1} \star (a, 0) + (qg + pk) \star (b, f h + pf) \\
    	& = \overline{d_1} \star (a, 0) + q \star (b\overline{g}, pfg) + k \star (0, fh).
    \end{align*}
    Therefore, $\vec{v} \in \langle (a, 0), (b \overline{g}, p f g), (0, pf h) \rangle$ and we conclude the proof.
\end{proof}

From \cite{ZprZps} and the matrix \eqref{eq:generator matrix 2} in \Cref{remark: zpzp2 generator matrix 2}, the generator matrix $\mathcal{G}_p$ of $\mathcal{C}_p$ has the form:
\begin{equation}
    \label{eq: generator matrix of C_p}
    \mathcal{G}_p =
    \begin{pmatrix}
        \begin{array}{ccc|ccc}
            I_{\kappa_1} & A & B & \vec{0} & \vec{0} & \vec{0} \\
            \vec{0} & I_{\kappa_2} & C & pT'_2 & \vec{0} & \vec{0} \\
            \vec{0} & \vec{0} & \vec{0} & pT_1 & pI_{\gamma - \kappa} & \vec{0} \\
            \vec{0} & \vec{0} & \vec{0} & pS & pR & pI_\delta \\
        \end{array}
    \end{pmatrix}.
\end{equation}
Comparing Lemma \ref{3 generators of zpzp2 cyclic codes} and Lemma \ref{generators of C_p},
the subcodes generated by $(b', fh)$ and the last $\delta$ rows of \eqref{eq:generator matrix 2} in \Cref{remark: zpzp2 generator matrix 2} should be permutation-equivalent.
Then any codeword generated by the last $\delta_2$ rows should belong to $\langle (b', fh) \rangle$, up to equivalence.

\begin{lemma} \label{lemma: claim of the ZpZp2 linearity thm}
    Let $\mathcal{C} = \langle (a, 0), (b, fh + pf)\rangle$ be a $\mathbb{Z}_p\mathbb{Z}_{p^2}$-additive cyclic code with $fhg = x^\beta-1$.
    Let $\mathcal{C}'$ be the subcode generated by $(b', fh)$ and $S = \{ \vec{u} = (\vec{0}, u') \in \mathcal{C} \}$,
    where $b' = b - \overline{\mu}b\overline{g}$ and $\lambda h + \mu g = 1$.
    Then $\mathcal{C}'$ is cyclic and \[ \mathcal{C}' = \left\langle (b', fh), \left(0, \frac{pfga}{\gcd(a, b\overline{g})} \right) \right\rangle. \]
\end{lemma}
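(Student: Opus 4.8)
The plan is to establish the two assertions in turn: first that $\mathcal{C}'$ is cyclic, and then the generator identity by double inclusion. Throughout I abbreviate $\nu = \gcd(a, b\overline{g})$, and I keep in mind the three-generator description $\mathcal{C} = \langle (a,0), (b\overline{g}, pfg), (b', fh)\rangle$ of \Cref{3 generators of zpzp2 cyclic codes} together with the description $\mathcal{C}_p = \langle (a,0),(b\overline{g}, pfg),(0,pfh)\rangle$ of \Cref{generators of C_p}. I also record the identity $g\star(b, fh+pf) = (b\overline{g}, pfg)$, valid since $gfh = x^\beta - 1 \equiv 0$, which lets me pass freely between the two- and three-generator pictures. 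For cyclicity, the first observation is that $S = \{(\vec{0}, u')\in\mathcal{C}\}$ is itself a $\mathbb{Z}_{p^2}[x]$-submodule of $\mathcal{C}$: it is closed under addition, and for any $d\in\mathbb{Z}_{p^2}[x]$ one has $d\star(\vec{0},u') = (\vec{0}, du')\in\mathcal{C}$, still with vanishing $X$-part. In particular $S$ is invariant under the shift $\pi = x\star(\cdot)$, and $\langle(b',fh)\rangle$ is cyclic by construction, so $\mathcal{C}' = \langle(b',fh)\rangle + S$ is a shift-invariant submodule, hence a $\mathbb{Z}_p\mathbb{Z}_{p^2}$-additive cyclic code by \Cref{Zpk[x] submodule}.

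For the inclusion $\langle(b',fh),(0,pfga/\nu)\rangle\subseteq\mathcal{C}'$ I only need $(0, pfga/\nu)\in\mathcal{C}'$, and since this vector has zero $X$-part it suffices to place it in $\mathcal{C}$, hence in $S$. I would exhibit it explicitly: because $\nu\mid a$ and $\nu\mid b\overline{g}$, the combination $\frac{ag}{\nu}\star(b, fh+pf) - \frac{b\overline{g}}{\nu}\star(a,0)$ has $Y$-part $\frac{a}{\nu}(gfh + pfg) = \frac{pfga}{\nu}$ (using $gfh = 0$) and $X$-part $\frac{a\overline{g}}{\nu}b - \frac{b\overline{g}}{\nu}a = 0$, which is exactly $(0, pfga/\nu)$.

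The reverse inclusion $\mathcal{C}'\subseteq\langle(b',fh),(0,pfga/\nu)\rangle$ is the substantive part, and here I would split a codeword $(\vec{0},u')\in S$ according to its additive order. For the order-$p$ part I would use \Cref{generators of C_p}: writing an element of $S\cap\mathcal{C}_p$ through $(a,0),(b\overline{g}, pfg),(0,pfh)$ and imposing the vanishing of the $X$-part gives $\overline{e_1}a + \overline{e_2}b\overline{g} = 0$, which forces $\frac{a}{\nu}\mid\overline{e_2}$; its $Y$-component then lies in the ideal $\langle pfga/\nu,\, pfh\rangle$, and since $(0,pfh) = p\star(b',fh)\in\langle(b',fh)\rangle$, the whole order-$p$ part lands in the target. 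For a general order-$p^2$ codeword $(\vec{0},u')\in S$ the idea is to subtract a suitable multiple $\xi\star(b',fh)$ so that the remainder becomes order $p$: I would seek $\xi$ with $\overline{\xi}\,b' = 0$ in $\mathbb{Z}_p[x]/(x^\alpha-1)$ and $\overline{\xi}\,\overline{fh} = \overline{u'}$ in $\mathbb{Z}_p[x]/(x^\beta-1)$, for then $(\vec{0},u') - \xi\star(b',fh) = (0, u'-\xi fh)\in S\cap\mathcal{C}_p$, reducing to the previous case.

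The hard part will be producing this $\xi$. The two requirements are a divisibility condition in the length-$\alpha$ ring (annihilation of $b'$) and a congruence $\overline{\xi}\equiv\overline{d_2}\pmod{\overline{g}}$ in the length-$\beta$ ring, both imposed on a single polynomial whose two reductions are coupled. I expect to resolve this by a Chinese-remainder argument, reducing solvability to the statement that $\gcd\!\left(\overline{g},\, \frac{x^\alpha-1}{\gcd(b', x^\alpha-1)}\right)$ divides $\overline{d_2}$, and then verifying this divisibility from the defining relation $a\mid\overline{hg}\,b$ of \Cref{generator polynomials of cyclic} together with $b' = \overline{\lambda h}\,b$. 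This cross-ring bookkeeping—and in particular checking that $\gcd(a,b\overline{g})$, rather than $\gcd(a,b)$, is the exponent that actually survives the reduction—is where the real effort lies; the permutation-equivalence remark preceding the lemma, which matches $\langle(b',fh)\rangle$ with the order-$p^2$ rows of $\mathcal{G}_S$, should serve as the guiding organizational principle.
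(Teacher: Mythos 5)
Your cyclicity argument, your proof of the inclusion $\bigl\langle (b',fh),(0,pfga/\gcd(a,b\overline{g}))\bigr\rangle\subseteq\mathcal{C}'$, and your treatment of the order-$p$ part of $S$ all agree with the paper's proof (the paper derives the same divisibility $\tfrac{a}{\gcd(a,b\overline{g})}\mid\overline{a_2}$ from the three-generator expression of \Cref{3 generators of zpzp2 cyclic codes}, just as you do from \Cref{generators of C_p}). The genuine gap is the order-$p^2$ case, which is the crux of the lemma. You reduce it to the solvability of the system $\overline{\xi}\equiv\overline{d_2}\pmod{\overline{g}}$, $\tfrac{x^\alpha-1}{\gcd(b',x^\alpha-1)}\mid\overline{\xi}$, i.e.\ to the divisibility $\gcd\bigl(\overline{g},\tfrac{x^\alpha-1}{\gcd(b',x^\alpha-1)}\bigr)\mid\overline{d_2}$, and then only say you \emph{expect} to verify it from $a\mid\overline{hg}\,b$ and $b'=\overline{\lambda h}\,b$. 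That verification is never carried out, and it is exactly the step the paper handles by a completely different mechanism: an appeal to the standard-form generator matrix of \Cref{remark: zpzp2 generator matrix 2} to split an order-$p^2$ element of $S$ as (order-$p$ element of $S$) plus (element of $\langle(b',fh)\rangle$), with no polynomial construction at all.

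Moreover, the divisibility you need does not follow from the hypotheses, so the sketch cannot be completed as written. Take $p=3$, $\alpha=\beta=4$, $f=x-1$, $h=x+1$, $g=x^2+1$ (so $fhg=x^4-1$ over $\mathbb{Z}_9$), $a=x^2-1$, $b=x-1$; the conditions of \Cref{generator polynomials of cyclic} hold, since $\overline{hg}\,b=(x^2-1)(x^2+1)$ is divisible by $a$. Here one may take $\lambda=4(x-1)$, $\mu=5$, so $b'=b\,\overline{\lambda h}=(x-1)^2(x+1)$, $\gcd(b',x^4-1)=x^2-1$, and your modulus is $m=x^2+1$ with $\gcd(\overline{g},m)=x^2+1$. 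The codeword $\vec{u}=(-1)\star(a,0)+(x+1)\star(b,fh+3f)=(\vec{0},(x+4)(x^2-1))$ lies in $S$, has order $9$, and has $\overline{d_2}=x+1$; since $x^2+1\nmid x+1$, your system for $\overline{\xi}$ is unsolvable. In fact no decomposition of the required kind exists here: if $\xi\star(b',fh)$ has zero $X$-part, then $(x^2+1)\mid\overline{\xi}$, hence $\overline{\xi}\,\overline{fh}$ is a multiple of $x^4-1$, so every element of $\langle(b',fh)\rangle$ with zero $X$-part has order at most $3$, while the second generator $(0,pfga/\gcd(a,b\overline{g}))=(0,3(x^4-1))$ is the zero codeword. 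Thus $S\not\subseteq\bigl\langle (b',fh),(0,pfga/\gcd(a,b\overline{g}))\bigr\rangle$ in this example: pushing your bookkeeping through produces a counterexample to the stated identity (and shows that the paper's own appeal to \Cref{remark: zpzp2 generator matrix 2} is not sound as written), not a proof of it. Either way, as a proof of the statement the proposal is incomplete at precisely the step that matters, and the missing divisibility cannot be supplied without additional assumptions relating $\overline{g}$, $x^\alpha-1$, $a$ and $b$.
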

\begin{proof}
	Denote by $\mathcal{I}$ the ideal generated by the two generators.
	Note that
	\begin{equation}
		\label{eq: expression of generators in C'}
		\left(0, \frac{pfga}{\gcd(a, b\overline{g})}\right) =
		- \frac{b \overline{g}}{\gcd(a, b\overline{g})} \star (a,0) + \frac{a}{\gcd(a,b\overline{g})}\star (b\overline{g},pfg) \in\mathcal{C}',
	\end{equation}
	since $a,b\in \mathbb{Z}_p[x]/(x^{\alpha}-1)$.
	Then $\mathcal{I} \subseteq \mathcal{C}'$ and it's sufficient to show $S \subseteq \mathcal{I}$.
	
    By Lemma \ref{3 generators of zpzp2 cyclic codes}, every codeword $(0, c)$ can be written as:
    \begin{equation}
        \label{eq: expression of codewords in C'}
        (0, c) = a_1 \star (a, 0) + a_2 \star (b\overline{g}, pfg) + a_3 \star (b', fh),
    \end{equation}
    for some $a_1, a_2, a_3\in \mathbb{Z}_{p^2}[x]/(x^\beta-1)$, then we have the following two cases.

    The order of $(0, c)$ is $p$ if and only if $\overline{a_1}a + \overline{a_2}b\overline{g} = 0$,
    and $\overline{a_1} \frac{a}{\gcd(a, b\overline{g})} = -\overline{a_2} \frac{b\overline{g}}{\gcd(a, b\overline{g})}$.
    Hence, $\frac{a}{\gcd(a, b\overline{g})} \mid \overline{a_2}$ since $\frac{a}{\gcd(a, b\overline{g})}$ and $\frac{b\overline{g}}{\gcd(a, b\overline{g})}$ are coprime.
    Thus, $\overline{a_2} = \overline{d}\frac{a}{\gcd(a, b\overline{g})}$ for some $\overline{d}$.
    Therefore, \[ a_1 \star (a, 0) + a_2 \star (b\overline{g}, pfg) = (0, a_2 pfg) = d \star \left(0, \frac{pfga}{\gcd(a, b\overline{g})} \right) \in \mathcal{I}, \]
    where $d$ is the Hensel lift of $\overline{d}$.

    If the order of $(0,c)$ is $p^2$, by \eqref{eq:generator matrix 2} in Remark \ref{remark: zpzp2 generator matrix 2}, we can get $(0, c) = (0, c_1) + (0, c_2)$,
    where the order of $(0, c_1)$ is $p$ and $(0, c_2) \in \langle (b', fh) \rangle$.
    By the above proof, $(0, c_1) \in \mathcal{I}$.
    Therefore, $S \subseteq \mathcal{I}$ and $\mathcal{C}' = \mathcal{I}$ is cyclic.
\end{proof}

Next we will give a necessary and sufficient condition for the Gray image to be linear.

\begin{theorem} \label{thm: linearity of images of ZpZp2 cyclic codes}
    Let $\mathcal{C} = \langle (a, 0), (b, fh + pf)\rangle$ be a $\mathbb{Z}_p\mathbb{Z}_{p^2}$-additive cyclic code with $fhg = x^\beta-1$.
    Then $\Phi(\mathcal{C})$ is linear if and only if $\overline{f'} = \overline{f}a / \gcd(a, \overline{b}g)$ is coprime with $\overline{g_\ell}$ for every $\ell \in L$.
\end{theorem}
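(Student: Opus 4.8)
The plan is to reduce the statement to the characterization already proved for cyclic codes over $\mathbb{Z}_{p^2}$ in \Cref{thm: linearity of images of Zp2 cyclic codes}. By \Cref{lemma: Phi(C) is linear iff pP(u,v) in C}, $\Phi(\mathcal{C})$ is linear precisely when $pP(\vec{u}, \vec{v}) \in \mathcal{C}$ for all $\vec{u}, \vec{v} \in \mathcal{C}$. The first observation I would make is that $P(\vec{u}, \vec{v})$ is a $\{0,1\}$-vector, so multiplication by $p$ annihilates the $\mathbb{Z}_p$-coordinates; hence $pP(\vec{u}, \vec{v}) = (\vec{0}, pP(u'', v''))$ has zero $X$-part, where $u'', v''$ denote the $Y$-parts of $\vec{u}, \vec{v}$. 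Consequently $pP(\vec{u}, \vec{v}) \in \mathcal{C}$ if and only if $pP(u'', v'') \in (S_p)_Y$, where $S = \{(\vec{0}, c) \in \mathcal{C}\}$ and $(S_p)_Y$ is the set of $Y$-coordinates of the order-$p$ elements of $S$. Since $P$ depends only on reductions modulo $p$ (\Cref{phi(u+v) = phi(u)+phi(v)+...}) and the $Y$-part of any codeword reduces into $\langle \overline{fh} \rangle_p^\beta$, the whole condition involves only pairs whose reductions range over $\langle \overline{fh} \rangle_p^\beta$.

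The second, and technical, step is to compute $(S_p)_Y$ and recognize it as the order-$p$ subcode of a genuine cyclic code over $\mathbb{Z}_{p^2}$. Writing $d = \gcd(a, b\overline{g})$ and $f' = fa/d$, I would combine \Cref{generators of C_p} and \Cref{lemma: claim of the ZpZp2 linearity thm}: the element $p(b', fh) = (\vec{0}, pfh)$ together with the generator $(\vec{0}, pf'g)$ of $\mathcal{C}'$ gives $(S_p)_Y = p\langle \overline{fh}, \overline{f'g} \rangle_p^\beta$. The divisibility $a \mid h_{k-1}b = \overline{hg}\,b$ from \Cref{generator polynomials of cyclic}, after cancelling $d$ and using $\gcd(a/d, b\overline{g}/d)=1$, yields $(a/d) \mid \overline{h}$; since $\overline{f}, \overline{h}, \overline{g}$ are pairwise coprime (because $x^\beta-1$ is squarefree), a short gcd computation then gives $\gcd(\overline{fh}, \overline{f'g}) = \overline{f'}$, so $(S_p)_Y = \langle pf' \rangle$. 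The same divisibility shows $\overline{f'} \mid \overline{fh}$, so $h' := fh/f'$ is a polynomial and $f'h'g = x^\beta-1$; thus $\mathcal{C}'_Y = \langle fh, pf'g \rangle = \langle f'h' + pf' \rangle$ is a cyclic code over $\mathbb{Z}_{p^2}$ with the \emph{same} third factor $g$, whose order-$p$ subcode is exactly $\langle pf' \rangle = (S_p)_Y$.

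With these identifications the two linearity problems coincide. Applying \Cref{lemma: Phi(C) is linear iff pP(u,v) in C} over $\mathbb{Z}_{p^2}$ to the auxiliary code, $\Phi(\mathcal{C}'_Y)$ is linear iff $pP(w_1, w_2) \in \langle pf' \rangle$ for all $w_1, w_2 \in \mathcal{C}'_Y$; since $\overline{\mathcal{C}'_Y} = \langle \overline{fh} \rangle_p^\beta$ as well, both the set of admissible carries and the target $\langle pf'\rangle$ are literally those from the first paragraph. Hence $\Phi(\mathcal{C})$ is linear if and only if $\Phi(\mathcal{C}'_Y)$ is linear, and \Cref{thm: linearity of images of Zp2 cyclic codes} applied to $\mathcal{C}'_Y$ gives that this holds iff $\overline{f'} = \overline{f}a/\gcd(a, \overline{b}g)$ is coprime with $\overline{g_\ell}$ for every $\ell \in L$, which is the claim. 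I expect the main obstacle to be the bookkeeping in the second step: verifying that the zero-$X$-part order-$p$ codewords are captured exactly by $\langle pf' \rangle$, and that the divisibility coming from the generator-polynomial conditions makes $f'$ a legitimate $\mathbb{Z}_{p^2}$-cyclic-code parameter that retains the original $g$; the concluding appeal to the $\mathbb{Z}_{p^2}$ theorem is then immediate.
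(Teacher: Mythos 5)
Your proposal is correct, and it arrives at exactly the paper's destination --- the auxiliary factorization $f' = fa/\gcd(a, b\overline{g})$, $h' = h\gcd(a, b\overline{g})/a$, $g' = g$ and an application of \Cref{thm: linearity of images of Zp2 cyclic codes} to $\langle f'h' + pf'\rangle$ --- but the middle of your argument takes a genuinely different route. The paper reduces in two structural steps: it first shows $\Phi(\mathcal{C})$ is linear iff $\Phi(\mathcal{C}')$ is linear, where $\mathcal{C}' = \langle (b', fh), S \rangle$, by reasoning about orders of codewords via the generator matrix \eqref{eq:generator matrix 2}; it then observes that $\mathcal{C}'_p = \langle (0,pf')\rangle$ is separable and invokes \Cref{lemma: linearity of Gray images with separable C_p} to pass to $\mathcal{C}'_Y$. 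You collapse both steps into a single carry-set matching: since $pP(\vec{u},\vec{v})$ has zero $X$-part and depends only on the mod-$p$ reductions of the $Y$-parts, which range exactly over $\langle \overline{fh}\rangle_p^\beta$, linearity of $\Phi(\mathcal{C})$ is precisely the condition that all such carries land in $(S_p)_Y$; you then compute $(S_p)_Y = \langle pf'\rangle$ directly (via $a \mid \overline{hg}\,b$, hence $(a/d) \mid \overline{h}$, and squarefreeness of $x^\beta-1$, giving $\gcd(\overline{fh}, \overline{f'g}) = \overline{f'}$), and observe that the literally identical condition characterizes linearity of $\Phi(\langle f'h'+pf'\rangle)$ by \Cref{lemma: Phi(C) is linear iff pP(u,v) in C}, since $\overline{\mathcal{C}'_Y} = \langle\overline{fh}\rangle_p^\beta$ and $(\mathcal{C}'_Y)_p = \langle pf'\rangle$ as well. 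What your route buys: it dispenses with the separability lemma and sidesteps a fragile step in the paper --- the assertion that ``any codeword $\vec{u} \notin \mathcal{C}'$ has order $p$'' is not literally true (e.g.\ $(a,0)+(b',fh)$ can have order $p^2$ without lying in $\mathcal{C}'$) and must be read as a statement about decomposing codewords into an order-$p$ part plus an element of $\mathcal{C}'$; your carry-matching needs no such claim. What the paper's route buys: \Cref{lemma: claim of the ZpZp2 linearity thm} and the generator matrix of $\mathcal{C}'$ are reusable structural facts about a subcode in its own right, whereas your gcd computation is tailored to this one theorem. Both arguments share the same minor abuses, which you should flag rather than inherit silently: defining $h' = fh/f'$ over $\mathbb{Z}_{p^2}$ really uses the Hensel lift of $a/d$ as a divisor of $h$, and the $\gcd(a, \overline{b}g)$ in the statement should read $\gcd(a, b\overline{g})$.
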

\begin{proof}
    Let $\vec{u} = (u, u')$, $\vec{w} = (w, w')\in \mathcal{C}$.
    Obviously, if $\vec{u} \in \mathcal{C}_p$, then $pP(\vec{u}, \vec{w}) = \vec{0} \in \mathcal{C}$.
    Note that by Lemma \ref{3 generators of zpzp2 cyclic codes} and Lemma \ref{generators of C_p},
    $\mathcal{C} = \langle (a, 0), (b\overline{g}, pfg) (b', fh) \rangle$ and $\mathcal{C}_p = \langle (a, 0), (b\overline{g}, pfg) (0, pfh)\rangle$,
    where $\lambda h + \mu g = 1$.
    Then, we only need to consider the codewords that are not in $\mathcal{C}_p$ and $(b', fh)$ will always be one generator of such codewords.
    Since $pP(\vec{u}, \vec{w}) = (\vec{0}, pP(u', w'))$, we define the set $S = \{ \vec{u} = (\vec{0}, u') \in \mathcal{C} \}$
    and the subcode $\mathcal{C}' = \langle (b', fh), S\rangle \subseteq \mathcal{C}$.
    Comparing with \eqref{eq:generator matrix 2} in \Cref{remark: zpzp2 generator matrix 2}, the generator matrix of $\mathcal{C}'$ is (permutation-equivalent) in the following form:
    \begin{equation} \label{eq:generator matrix of C'}
        \left(  \begin{array}{ccc|ccc}
            \vec{0} & \vec{0} & \vec{0} & p T_1 & p I_{\gamma - \kappa} & \vec{0} \\
            \vec{0} & \vec{0} & S' & S'' & R & I_{\delta} \\
        \end{array} \right).
    \end{equation}

    For any codeword $\vec{u} \notin \mathcal{C}'$, by \eqref{eq:generator matrix 2} in \Cref{remark: zpzp2 generator matrix 2}, its order is $p$, then $pP(\vec{u}, \vec{w}) = (\vec{0}, \vec{0}) \in \mathcal{C}'$.
    Thus, we only need to consider the codewords $\vec{u}, \vec{w}\in \mathcal{C}'$.
    Obviously, $pP(\vec{u}, \vec{w}) = (\vec{0}, pP(u', w')) \in S \subseteq \mathcal{C}' \subseteq \mathcal{C}$.
    Therefore, $\Phi(\mathcal{C})$ is linear if and only if $pP(\vec{u}, \vec{w}) \in \mathcal{C}'$ for any $\vec{u}, \vec{w}\in \mathcal{C}'$,
    which means, for all $\vec{u}, \vec{w}\in \mathcal{C}'$, $pP(\vec{u}, \vec{w}) \in \mathcal{C}'$, i.e., $\Phi(\mathcal{C}')$ is linear.

    By Lemma \ref{lemma: claim of the ZpZp2 linearity thm}, $\mathcal{C}'$ is cyclic and
    \[ \mathcal{C}' = \left\langle (b', fh), \left(0, \frac{pfga}{\gcd(a, b\overline{g})} \right) \right\rangle. \]

    Let $f' = \frac{fa}{\gcd(a, b\overline{g})}$, $h' = \frac{h\gcd(a, b\overline{g})}{a}$ and $g' = g$.
    It's clear that $f'g'h' = x^\beta-1$ with $f'$, $g'$ and $h'$ coprime factors.
    Thus, we have $\mathcal{C}' = \langle (b', f'h'), (0, pf')\rangle$.
    Obviously, the codewords of order $p$ in $\mathcal{C}'$ belong to $S$, i.e., the left part is all zero vector.
    Then, $\mathcal{C}'_p = \langle (0, pf'h'), (0, pf')\rangle = \langle (0, pf')\rangle$ is separable since $\langle (0, pf')\rangle = \{ 0\} \times \langle pf'\rangle$.
    By Lemma \ref{lemma: linearity of Gray images with separable C_p}, $\Phi(\mathcal{C}')$ is linear if and only if $\Phi(\mathcal{C}'_Y)$ is linear.
    Moreover, $\mathcal{C}'_Y = \langle f'h', pf' \rangle = \langle f'h' + pf' \rangle$ since $f'h'g' = x^\beta - 1$.
    According to Theorem \ref{thm: linearity of images of Zp2 cyclic codes}, $\Phi(\mathcal{C}'_Y) = \Phi(\langle f'h' + pf' \rangle)$ is linear if and only if
    $\gcd(\overline{f'}, \overline{g'_\ell}) = 1$ for every $\ell \in L$.
\end{proof}

\begin{remark}
    Let $\mathcal{C} = \langle (a, 0), (b, fh + pf)\rangle$ be a $\mathbb{Z}_p\mathbb{Z}_{p^2}$-additive cyclic code with $fhg = x^\beta-1$.
    Similar to the proof of \cite[Theorem 12]{binary_images_of_z2z4_cyclic}, it's easy to obtain that $\Phi(\mathcal{C})$ is linear if $g = 1$ or $g = x^s-1$ with $s\mid \beta$.
   And these linear Gray images are called \textit{trivial}.
   Similar to \Cref{remark: linearity Zp2}, the nontrivial linear Gray image exists if
   \[ \gcd\left( \frac{a}{\gcd(a, b\overline{g})}, x^e-1 \right) = 1, \quad \textnormal{and} \quad \overline{f} \mid (1+x^e+x^{2e}+\dots+x^{n-e}), \]
   where $e$ is the order of $\overline{g}$ and $e < n$.
\end{remark}

\begin{example}
	Let $\mathcal{C} = \langle (x-1,0), (1,fh + 3f) \rangle$ be a $\mathbb{Z}_3 \mathbb{Z}_9$-additive cyclic code of type $(\alpha \geqslant 2, 11; \gamma, \delta; \kappa)$.
	Since $x^{11}-1 = (x-1) (x^5+7x^4-x^3+x^2+6x-1) (x^5+3x^4-x^3+x^2+2x-1)$, then let $p_5 = x^5+7x^4-x^3+x^2+6x-1$ and $q_5 = x^5+3x^4-x^3+x^2+2x-1$.
	Both of $p_5$ and $q_5$ are basic irreducible polynomials over $\mathbb{Z}_9$.
	Then the linearity of $\Phi(\mathcal{C})$ is listed in Table \ref{table: Gray image of Z3Z9 cyclic codes-3}.
\end{example}

\begin{table}[h]
	\caption{Linearity of $\Phi(\mathcal{C})$ with $\mathcal{C} = \langle (x-1,0), (1,fh + 3f) \rangle$:}
	\label{table: Gray image of Z3Z9 cyclic codes-3}
	\centering
	\begin{tabular}{c|c|c|c}
		\hline\hline
		$f$ & $g$ & $h$ & Linearity \\
		\hline\hline
		$-$ & $1$ & $-$ & Linear \\
		\hline
		$-$ & $x-1$ & $-$ & Linear \\
		\hline
		$p_5$ & $q_5$ & $x-1$ & Nonlinear \\
		\hline
		$1$ & $q_5$ & $(x-1)p_5$ & Linear \\
		\hline
		$x-1$ & $q_5$ & $p_5$ & Nonlinear \\
		\hline
		$q_5$ & $p_5$ & $x-1$ & Nonlinear \\
		\hline
		$x-1$ & $p_5$ & $q_5$ & Nonlinear \\
		\hline
		$1$ & $p_5$ & $(x-1)q_5$ & Linear\\
		\hline
		$1$ & $(x-1)p_5$ & $q_5$ & Linear \\
		\hline
		$q_5$ & $(x-1)p_5$ & $1$ & Nonlinear \\
		\hline
		$1$ & $(x-1)q_5$ & $p_5$ & Linear \\
		\hline
		$p_5$ & $(x-1)q_5$ & $1$ & Nonlinear \\
		\hline\hline
	\end{tabular}
\end{table}

\section{Linearity of Gray Images of Some Special Cyclic Codes over \texorpdfstring{$\mathbb{Z}_{p^2}$}{}}\label{sec: examples}

Let $\langle fh + pf \rangle$ be a cyclic code over $\mathbb{Z}_{p^2}$ and $fhg = x^n - 1$.
By Theorem \ref{thm: linearity of images of Zp2 cyclic codes}, the linearity of the Gray image are determined completely.
In this section, assume that $q$ is prime and $p \neq q$.

\begin{example} \label{thm: examples}
	Let $n = q^2$ with the prime $q$ and $x^n - 1 = (x-1)ab$,
	where $\overline{a} = 1 + x + \cdots + x^{q-1}$ and $\overline{b} = 1 + x^q + x^{2q} + \cdots + x^{(q-1)q}$ are both irreducible over $\mathbb{F}_p$.
	
	From now on, assume that $f \neq 1$, $g \neq 1$ and $g \neq x-1$.
	It's easy to check that $\overline{a} \otimes \overline{a} = x^q-1$ and $\overline{b} \otimes \overline{b} = x^{q^2} - 1$, which implies $\Phi(\mathcal{C})$ is linear only if $g \mid (x-1)a$.
	
	Note that $\overline{g_2} = \dots = \overline{g_p} = (x-1)\overline{a}$ whenever $g = a$ or $g = (x-1)a$.
	Thus, $\Phi(\mathcal{C})$ is linear if and only if $\overline{f}$ is coprime with $(x-1)\overline{a}$, i.e., $f = b$,
	since the roots of $\overline{b}$ are not $q_{th}$ roots of unity.
	
	Then the linearity of the Gray image of the code $\mathcal{C} = \langle fh, pf \rangle$ is
	listed in \Cref{table: examples 1}.
\end{example}

\begin{table}[h]
	\caption{Linearity of the codes in \Cref{thm: examples}}
	\label{table: examples 1}
	\centering
	\begin{tabular}{c|c|c|c|c}
		\hline\hline
		Length & $f$ & $g$ & $h$ & Linearity  \\
		\hline\hline
		Any & $1$ & $\ast$ & $\ast$ & Linear    \\
		\hline
		Any & $\ast$ & $1$ & $\ast$ & Linear  \\
		\hline
		Any & $\ast$ & $x-1$ & $\ast$ & Linear   \\
		\hline
		$q^2$ & $a$ & $(x-1)b$ & $1$ & Nonlinear    \\
		\hline
		$q^2$ & $a$ & $b$ & $x-1$ & Nonlinear     \\
		\hline
		$q^2$ & $b$ & $(x-1)a$ & $1$ & Linear     \\
		\hline
		$q^2$ & $b$ & $a$ & $x-1$ & Linear     \\
		\hline
		$q^2$ & $x-1$ & $a$ & $b$ & Nonlinear   \\
		\hline
		$q^2$ & $x-1$ & $b$ & $a$ & Nonlinear     \\
		\hline
		$q^2$ & $x-1$ & $ab$ & $1$ & Nonlinear    \\
		\hline
		$q^2$ & $(x-1)a$ & $b$ & $1$ & Nonlinear    \\
		\hline
		$q^2$ & $(x-1)b$ & $a$ & $1$ & Nonlinear     \\
		\hline\hline
	\end{tabular}
\end{table}

\begin{table}[!t]
    \renewcommand{\arraystretch}{1.3}
    \caption{Linearity of $\Phi(\mathcal{C})$ with $\mathcal{C} = \langle (a,0), (b,fh + pf) \rangle$}
    \label{table: Gray image of Z3Z9 cyclic codes-1}
    \centering
    \begin{tabular}{c|c|c|c|c|c}
    \hline\hline
    $\alpha$ & $\beta$ & $f$ & $g$ & $h$ & Linearity \\
    \hline\hline
    - & - & - & $1$ & - & Linear \\
    \hline
    - & - & $1$ & $ - $ & - & Linear \\
    \hline
    - & - & - & $x-1$ & - & Linear \\
    \hline
    - & $ts$ & - & $x^s-1$ & - & Linear \\
    \hline\hline
    \end{tabular}
\end{table}

\begin{example}
    Let $\mathcal{C} = \langle (a,0), (b,fh + pf) \rangle$ be a $\mathbb{Z}_3 \mathbb{Z}_9$-additive cyclic code of type $(\alpha, \beta; \gamma, \delta; \kappa)$.
    Then the linearity of $\Phi(\mathcal{C})$ of some special $\mathcal{C}$ are listed in Table \ref{table: Gray image of Z3Z9 cyclic codes-1}.
\end{example}

Let $n$ and $m$ be two positive integers such that $\gcd(n,m) = 1$,
then the least positive integer $k$ for which $m^{k} \equiv 1 \pmod{n}$ is called the \textit{multiplicative order} of $m$ modulo $n$.

Assume that $n$ and $p$ are coprime, then the \textit{cyclotomic coset} of $p$ modulo $n$ containing $i$ is defined by
\[ C_i = \{ (i\cdot p^j \pmod{n}) \in \mathbb{Z}_n : j=0,1,\cdots \}. \]

\begin{theorem} \label{thm: factors of x^n-1}
	Let $n$ be odd prime and $\ell = (n-1)/2$ be the multiplicative order of $p$ modulo $n$, then the following statements hold:
	
	$(1)$ over $\mathbb{Z}_{p^2}$, we have $x^n - 1 = (x-1) f_1 f_2$,
	where $\overline{f_1}$ and $\overline{f_2}$ are two irreducible polynomials of degree $\ell$ over $\mathbb{F}_p$;
	
	$(2)$ $\overline{f}_i \otimes \overline{f}_i = \frac{x^n -1}{x-1}$, $\overline{f}_i \otimes \overline{f}_i \otimes \overline{f}_i = x^n -1$ and $\gcd(\overline{f_i}, \overline{f_j} \otimes \overline{f_j}) = \overline{f_i}$, where $i=1,2$;
	
	$(3)$ the Gray image of the cyclic code $\langle fh + pf \rangle$ over $\mathbb{Z}_{p^2}$ is linear if and only if $f=1$ or $g\in \{ 1, x-1 \}$, where $fgh = (x-1) f_1 f_2 = x^n - 1$.
\end{theorem}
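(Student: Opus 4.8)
The plan is to handle the three parts in order, funneling everything into one additive–combinatorial fact about the exponent sets of the roots of $\overline{f_1}$ and $\overline{f_2}$. For (1) I would argue through $p$-cyclotomic cosets modulo $n$. Since $n$ is prime, the nonzero residues form a cyclic group of order $n-1$, and the monic irreducible factors of $\overline{x^n-1}$ over $\mathbb{F}_p$ are indexed by the cosets $C_i$, with $\deg$ equal to $|C_i|$. The hypothesis that $p$ has multiplicative order $\ell=(n-1)/2$ says precisely that $\langle p\rangle$ is the unique index-$2$ subgroup, i.e.\ the group of quadratic residues. Hence, besides $C_0=\{0\}$ (which contributes $x-1$), there are exactly two nonzero cosets — the residues and the nonresidues — each of size $\ell$, giving $\overline{x^n-1}=(x-1)\,\overline{f_1}\,\overline{f_2}$ with $\overline{f_1},\overline{f_2}$ irreducible of degree $\ell$. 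I would then lift this uniquely to $\mathbb{Z}_{p^2}$ using Hensel's lemma and the factorization into basic irreducibles of \cite{p-adic_cyclic_codes}. Note also that $\gcd(n,p)=1$ makes $x^n-1$ squarefree, so $x-1,f_1,f_2$ are pairwise coprime.

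For (2) I would pass from circle products to sumsets of exponents. Writing $z$ for a primitive $n$-th root of unity and $R\subseteq\mathbb{Z}_n$ for the set of exponents of the roots of $\overline{f_1}$ (so $R$ is the group of quadratic residues), the roots of $\overline{f_1}\otimes\overline{f_1}$ are $\{z^{a+b}:a,b\in R\}$, whence its exponent set is the sumset $R+R$. The claim $\overline{f_i}\otimes\overline{f_i}=\tfrac{x^n-1}{x-1}$ is thus equivalent to $R+R=\mathbb{Z}_n\setminus\{0\}$, and this is the heart of the matter. I would prove it with a quadratic-character (Jacobsthal) sum: for $c\neq0$ the number of representations $c=a+b$ with $a,b\in R$ equals $\tfrac14\big(n-2-2\chi(c)-\chi(-1)\big)$, where $\chi$ is the Legendre symbol, using the standard evaluation $\sum_{a}\chi(a)\chi(c-a)=-\chi(-1)$. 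The decisive point is $\chi(-1)=-1$ — this is exactly where $-1$ being a nonresidue, i.e.\ $n\equiv3\pmod4$, is indispensable: it makes the count strictly positive for every $c\neq0$ (once $n\ge7$) while the $c=0$ count vanishes, yielding $R+R=\mathbb{Z}_n\setminus\{0\}$ on the nose. The nonresidue factor $\overline{f_2}$ follows by scaling $R$ by a fixed nonresidue. Granting this, the rest is immediate: the threefold exponent set is $(R+R)+R=(\mathbb{Z}_n\setminus\{0\})+R=\mathbb{Z}_n$ (as $|R|>1$, one can always pick an addend distinct from any target), so $\overline{f_i}^{\otimes3}=x^n-1$; and since $\overline{f_j}\otimes\overline{f_j}=\tfrac{x^n-1}{x-1}$ carries every nonzero root it is divisible by $\overline{f_i}$, giving $\gcd(\overline{f_i},\overline{f_j}\otimes\overline{f_j})=\overline{f_i}$.

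For (3) I would invoke \Cref{thm: linearity of images of Zp2 cyclic codes}: $\Phi(\langle fh+pf\rangle)$ is linear iff $\overline{f}$ is coprime to $\overline{g_\ell}$ for every $\ell\in L$, and then run a short case analysis on the divisor $g$ of $(x-1)f_1f_2$, noting that $3\in L$ for all odd $p$. If $g\in\{1,x-1\}$ then the exponent set of $\overline{g}$ lies in $\{0\}$, so each $\overline{g_\ell}$ divides $x-1$; since $x^n-1$ is squarefree, $x-1\nmid\overline{f}$, so coprimality holds and the image is linear. If instead $f_1\mid g$ or $f_2\mid g$, the exponent set of $\overline{g}$ contains a full quadratic class, so the threefold identity of (2) forces $\overline{g_3}=x^n-1$; coprimality of $\overline{f}$ with $x^n-1$ then forces $\overline{f}=1$, i.e.\ $f=1$. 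Combining the cases gives linearity iff $f=1$ or $g\in\{1,x-1\}$.

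The main obstacle is the identity $R+R=\mathbb{Z}_n\setminus\{0\}$ in part (2). It is the only genuinely number-theoretic step — everything else is bookkeeping with cyclotomic cosets and an appeal to \Cref{thm: linearity of images of Zp2 cyclic codes} — and it is exactly the place where the behaviour of $-1$ modulo $n$ (equivalently $n\equiv3\pmod4$) is essential, since without it the sumset acquires the element $0$ and the clean circle-product formulas of (2) break down.
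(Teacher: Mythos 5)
Your proposal follows essentially the same route as the paper: part (1) via the $p$-cyclotomic cosets modulo $n$ (which under the order hypothesis are $C_0$, the quadratic residues, and the non-residues) plus Hensel lifting; part (2) by translating circle products into sumsets of root exponents and counting representations with a quadratic-character sum; part (3) by showing $\overline{g_3}=x^n-1$ whenever $g\notin\{1,x-1\}$ and invoking \Cref{thm: linearity of images of Zp2 cyclic codes} with $3\in L$. The paper does the counting by citing the formula $N(a_1x_1^2+a_2x_2^2=x_0)=n+\upsilon(x_0)\eta(-a_1a_2)$ from Lidl--Niederreiter; your Jacobsthal evaluation $\sum_a\chi(a)\chi(c-a)=-\chi(-1)$ is the same computation in cleaner form, and unlike the paper's crude bound $N\geqslant 1$ your exact count correctly excludes representations that use $0$ as a summand, which matters when the target is itself a square.

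The one place where you diverge is precisely where the paper is defective, and you are right. The identity $R+R=\mathbb{Z}_n\setminus\{0\}$ does require $\chi(-1)=-1$, i.e.\ $n\equiv 3\pmod 4$ (plus $n\geqslant 7$), and the theorem's stated hypotheses do not imply this: the paper's proof literally breaks off mid-sentence at ``Note that $N(a_1x_1^2+a_2x_2^2=0)=$'' and then asserts $C_1+C_1=C_i+C_i=\mathbb{Z}_n^{\ast}$, which is false whenever $-1$ is a quadratic residue. Concretely, $n=5$, $p=19$ (the order of $19\equiv 4$ modulo $5$ is $2=(n-1)/2$) and $n=13$, $p=17$ (the order of $17\equiv 4$ modulo $13$ is $6=(n-1)/2$) satisfy the hypotheses with $n\equiv 1\pmod 4$; there $0\in C_1+C_1$, so $(x-1)\mid\overline{f_1}\otimes\overline{f_1}$ and statement (2) fails as written. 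So what you flagged as ``indispensable'' is not a gap in your argument but a missing hypothesis of the theorem: with $n\equiv 3\pmod 4$ added, your proof is complete and slightly tighter than the paper's; without it, no proof of (2) can exist. (Part (3) happens to survive in these counterexamples --- for $n=13$ one even gets $\overline{g_2}=x^n-1$, and for $n=5$ one can use $\ell=5\in L$ --- but that needs a separate argument; the clean formulas of (2) genuinely require the congruence.)
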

\begin{proof}
	By \cite[Theorem 2.45 \& Theorem 2.47]{Finite_Fields}, it's known that $x^n - 1 = (x-1) \overline{f_1} \overline{f_2}$,
	where $\overline{f_1}$ and $\overline{f_2}$ are two irreducible polynomials of degree $\ell$ over $\mathbb{F}_p$.
	Then, using Hensel's lemma, we can get $x^n - 1 = (x-1) f_1 f_2$ and the first statement $(1)$ holds.
	
	As for $(2)$, $\mathbb{Z}_n$ (actually, $\mathbb{Z}_n$ is a field) can be partitioned into three parts, $C_0$, $C_1$ and $C_i$ for some $i$
	and the sizes of $C_1$ and $C_i$ are both $\ell$.
	Obviously, $C_1 = \{ 1, p, p^2, \cdots, p^{\ell-1} \}$, since $\ell$ is the multiplicative order of $p$ modulo $n$.
	Note that $p$ is the square (\textit{quadratic residue modulo $n$}) of some element in $\mathbb{Z}_n^{\ast} = \mathbb{Z}_n \setminus \{ 0 \}$.
	Thus, all the elements in $C_1$ are all the squares of $\mathbb{Z}_n^{\ast}$.
	Equivalently, $C_i$ is the set of all non-squares (\textit{quadratic non-residue modulo $n$}) in $\mathbb{Z}_n^{\ast}$.
	
	Recall the \textit{quadratic character} $\eta$ (see \cite[Example 5.10]{Finite_Fields}) on $\mathbb{Z}_n^{\ast}$ and the \textit{integer-valued function} $\upsilon$
	(see \cite[Definition 6.22]{Finite_Fields}) of $\mathbb{Z}_n$, where
	\[
	\eta(x) =
	\begin{cases}
		1, & x \ \textnormal{is a square}, \\
		-1, & x \ \textnormal{is a non-square}, \\
	\end{cases}
	\quad
	\textnormal{and}
	\quad
	\upsilon(y) =
	\begin{cases}
		-1, & y \in \mathbb{Z}_n^{\ast}, \\
		n-1, & y = 0.  \\
	\end{cases}
	\]
	Let $N(a_1 x_1^2 + a_2 x_2^2 = x_0)$ be the number of solutions of the equation $a_1 x_1^2 + a_2 x_2^2 = x_0$ in $\mathbb{Z}_n^2$, where $a_1, a_2 \in \mathbb{Z}_n^{\ast}$.
	Thus, for every $x_0 \in \mathbb{Z}_n$, by \cite[Lemma 6.24]{Finite_Fields}, we have
	\[ N(a_1 x_1^2 + a_2 x_2^2 = x_0) = n + \upsilon(x_0) \eta(-a_1 a_2) \geqslant n-(n-1) = 1 > 0. \]
	Hence, every element in $\mathbb{Z}_n^{\ast}$ is the sum of two squares, and also the sum of two non-squares.
	Note that $N(a_1 x_1^2 + a_2 x_2^2 = 0) = $
	Therefore, \[ C_1 + C_1 = C_i + C_i = \mathbb{Z}_n^{\ast}, \]
	where the sum $A+B$ of two sets $A$ and $B$ is the set $\{ a+b | a\in A, b\in B \}$.
	Therefore, the roots of $\overline{f}_1$ are also the roots of $\overline{f}_2 \otimes \overline{f}_2$, whose roots are $\mathbb{Z}_n^{\ast}$.
	
	Moreover, it's easy to check that
	\[ C_1 + C_1 + C_1 = C_i + C_i + C_i = \mathbb{Z}_n, \]
	which implies that $\overline{f}_i \otimes \overline{f}_i \otimes \overline{f}_i = x^n - 1$, $i=1, 2$.
	
	As for $(3)$, it's easy to check that $\overline{g_3} = x^n - 1$ when $g \neq 1$ and $g \neq x-1$.
	Hence, the Gray image is linear if and only if $f=1$.
\end{proof}

\begin{remark}
	If $p=2$, then the item $(3)$ of \Cref{thm: factors of x^n-1} will be a little different.
	It should be modified as \textit{the Gray image is linear if and only if $f=1$ or $g\in \{ 1, x-1 \}$, or $f=x-1$, $g=f_i$ for some $i$}.
	For example, over $\mathbb{Z}_4$, we have \[ x^7 - 1 =  (x + 3) (x^3 + 2x^2 + x + 3) (x^3 + 3x^2 + 2x + 3). \]
	Let $f = x-1$, $g = x^3 + 2x^2 + x + 3$, $h = x^3 + 3x^2 + 2x + 3$, then the Gray image of the cyclic code $\langle fh + 2f \rangle$ is linear.
\end{remark}

\section{Conclusions}\label{sec:conclusions}
In this paper, we studied the $p$-ary images (Gray images) of cyclic codes over $\mathbb{Z}_{p^2}$ and $\mathbb{Z}_p \mathbb{Z}_{p^2}$.
The main contributions of this paper are the following:
\begin{enumerate}
	\renewcommand{\labelenumi}{(\theenumi)}
	\item We defined the polynomial $P(x_1, x_2)$, a characteristic function with lower degree than $f(x,y)$ defined in \cite[Secction IV]{LS_Gray_map}.
	Note that the degrees of terms of $P(x_1,x_2)$ plays an important role in this paper.
	
	\item We gave the generator polynomials of a $\mathbb{Z}_p \mathbb{Z}_{p^k}$-additive cyclic code (See \Cref{generator polynomials of cyclic}), which is a generalization of \cite[Theorem 11]{z2z4_additive_cyclic_codes}.
	
    \item We obtained the minimal spanning set of a $\mathbb{Z}_p \mathbb{Z}_{p^k}$-additive cyclic code (See Theorem \ref{spanning set}) and when $k = 2$, we gave the relation between its type and generator polynomials (See \Cref{thm: type and generator polynomials}).
    And some optimal codes (See \Cref{table: parameters of p-ary images,table: parameters of p-ary images of ZpZp2}) are found by different ways from the methods mentioned in \cite{codetable}.

    \item We presented a necessary and sufficient condition of the Gray image of $\mathbb{Z}_p \mathbb{Z}_{p^2}$-additive cyclic code to be linear
            (See Theorem \ref{thm: linearity of images of Zp2 cyclic codes} and Theorem \ref{thm: linearity of images of ZpZp2 cyclic codes}).
    We determined some special $\mathbb{Z}_p \mathbb{Z}_{p^2}$-additive cyclic codes with $\gcd(\beta, p) = 1$ whose Gray images are linear $p$-ary codes.
    Specially, if $\alpha = 0$, they are the linear cyclic codes over $\mathbb{Z}_{p^2}$ whose Gray images are linear.
            As for the linear cyclic Gray image, see \cite[Theorem 4.13]{LS_Gray_map}.

    \item We determined the linearity of the Gray images for two classes of cyclic codes (See \Cref{thm: examples} and \Cref{thm: factors of x^n-1}).
\end{enumerate}

In fact, for some codes, the Gray image is linear if and only if $\overline{f}$ and $\overline{g_p}$ are coprime.
The readers interested in the Gray images can characterize such cyclic codes and give the condition.
For this work, cyclotomic cosets may be necessary.
Moreover, it's also an interesting topic to find the condition for the Gray image of the case $p\mid n$ or a $\mathbb{Z}_{p^k}$-cyclic code to be linear. 

\section{Conflict of Interest}

The authors have no conflicts of interest to declare that are relevant to the content
of this article.

\section{Data Deposition Information}

Our data can be obtained from the authors upon reasonable request.


\end{document}